\documentclass[11pt, letterpaper]{article}

\usepackage[T1]{fontenc}
\usepackage{lmodern}
\usepackage{amsmath, amssymb, amsthm, dsfont, mathtools}
\usepackage{thmtools}
\usepackage{thm-restate}
\usepackage{xspace}
\usepackage{graphicx}
\usepackage{tikz}
\usepackage{nicefrac}
\usepackage{aligned-overset}
\usepackage[dvipsnames]{xcolor}
\definecolor{darkgreen}{RGB}{0,128,43}

\usetikzlibrary{calc, backgrounds, decorations.markings, arrows.meta}

\usepackage{enumitem}
\newcommand{\aff}[1]{\textcolor{black!50}{#1}}

\title{A tight lower bound for malicious online bipartite matching \\ with limited recourse budget\thanks{\textbf{Funding:} This work is a result of research conducted within project number 2023/51/B/ST6/02833 financed by the National Science Centre, Poland (Bartłomiej Bosek, Paweł Putra and Anna Zych-Pawlewicz).\\
Júlia Baligács was funded by the European Union through the European Research Council under the project BOBR (grant agreement No.~948057) during employment in Warsaw and under the project CCOO (grant agreement No.~101165139) during employment in Oxford. Views and opinions expressed are however those of the authors only and do not necessarily reflect those of the European Union or the European Research Council. Neither the European Union nor the granting authority can be held responsible for them.\\
Marek Sokołowski was supported by the Deutsche Forschungsgemeinschaft (DFG, German Research Foundation) grant number 559177164.}}

\author{
  Júlia Baligács\\{\small\aff{University of Oxford}}\\
    \href{mailto:jbaligacs@gmail.com}{\small jbaligacs@gmail.com}\bigskip
    \and
  Bart{\l}omiej Bosek\\{
  \small\aff{Jagiellonian University, Krak\'ow}}\\
    \href{mailto:bartlomiej.bosek@uj.edu.pl}{\small bartlomiej.bosek@uj.edu.pl}\bigskip
    \and 
  Paweł Putra\\{\small\aff{University of Warsaw}}\\
    \href{mailto:p.putra@uw.edu.pl}{\small p.putra@uw.edu.pl}
  \and
  Marek Sokołowski\\{\small\aff{Max Planck Institute for Informatics, SIC, Saarbrücken}
  }\\
    \href{mailto:msokolow@mpi-inf.mpg.de}{\small msokolow@mpi-inf.mpg.de}
  \and
  Anna Zych-Pawlewicz\\{\small\aff{University of Warsaw}}\\
    \href{mailto:anka@mimuw.edu.pl}{\small anka@mimuw.edu.pl}
  }

 \date{}

\usepackage[margin=1in]{geometry}
\usepackage[sort,numbers]{natbib}

\DeclareMathOperator{\girth}{girth}
\DeclareMathOperator{\dist}{dist}

\usepackage{hyperref}
\usepackage[nameinlink]{cleveref}

\crefformat{lemma}{#2Lemma~#1#3}
\crefformat{theorem}{#2Theorem~#1#3}
\crefformat{proposition}{#2Proposition~#1#3}
\crefformat{remark}{#2Remark~#1#3}
\crefformat{equation}{#2(#1)#3}
\crefformat{corollary}{#2Corollary~#1#3}
\crefformat{example}{#2Example~#1#3}
\crefformat{definition}{#2Definition~#1#3}
\crefformat{figure}{#2Figure~#1#3}
\crefformat{observation}{#2Observation~#1#3}
\crefformat{section}{#2Section~#1#3}
\crefformat{question}{#2Question~#1#3}

\definecolor{LinkViolet}{hsb}{0.37,1,0.5}
\hypersetup{
  colorlinks=true,
  linkcolor=Violet,
  citecolor=Violet,
  urlcolor=green!35!black
}

\theoremstyle{definition}
\newtheorem{definition}{Definition}

\theoremstyle{plain}

\newtheorem{lemma}[definition]{Lemma}

\newtheorem{observation}[definition]{Observation}

\newcommand{\N}{\mathbb{N}}
\newcommand{\R}{\mathbb{R}}
\newcommand{\bigO}{\mathcal{O}}

\newcommand{\sap}{\ensuremath{\mathsf{SAP}}\xspace}

\newcommand{\obm}{\textsc{Malicious Online Bipartite Matching with Recourse}\xspace}
\newcommand{\obmr}{\textsc{Online Bipartite Matching with Recourse}\xspace}

\newcommand{\bra}[1]{\ensuremath{({#1})}}
\newcommand{\fO}[1]{\ensuremath{\mathcal{O}\bra{#1}}}
\newcommand{\fTheta}[1]{\ensuremath{\Theta\bra{#1}}}
\newcommand{\tuple}[1]{\ensuremath{\langle\hspace{0.5mm}{#1}\hspace{0.5mm}\rangle}}

\newcommand{\size}[1]{\ensuremath{{|}{#1}{|}}}

\tikzset{
  midarrow/.style={
    postaction={decorate},
    decoration={markings, mark=at position .7 with {\arrow[#1]{Stealth}}}
  }
}

\begin{document}
\renewcommand\footnotemark{}
\maketitle
\begin{abstract}
We study one-sided online bipartite matching with recourse.
In this setting, one side of a bipartite graph is known in advance, while vertices on the other side arrive online together with their incident edges.
After each arrival, the algorithm must maintain a maximum-cardinality matching while minimizing the total number of reallocations, also known as the recourse budget.
Despite extensive work, the exact recourse complexity of the problem remains unsettled: the best lower bound is $\Omega(n \log n)$, whereas the best upper bound is $\mathcal{O}(n \log^2 n)$, where $n$ denotes the number of online vertices.
Tight upper bounds of $\mathcal{O}(n \log n)$ are known only for restricted graph classes, such as forests. 

The best known upper bounds are attained by a very simple and natural algorithm \sap, which after each arrival applies a shortest augmenting path, and it is conjectured to be optimal. 
All known upper bound analyses of this algorithm do not depend on the particular maximum matching maintained by the algorithm.
Consequently, they also apply to a more difficult problem, which we call the malicious matching setting: after each arrival, the maintained matching is replaced by a worst-case maximum matching for the next step.
This led to the conjecture that the malicious setting still admits an $\mathcal{O}(n \log n)$ recourse bound, in line with the conjectured optimal complexity of the original model.

Our main result is an $\Omega(n \log^2 n)$ lower bound for the malicious matching setting, thus disproving the conjecture.
Together with the previous upper bound, this settles the asymptotic recourse complexity of the malicious variant of the problem.
We complement our lower bound with an upper bound of $\mathcal{O}(n \log n)$ for expander graphs.
\end{abstract}

\section{Introduction}
One of the most fundamental problems in computer science with a long history of research devoted to it is the bipartite matching problem. In 1973, Hopcroft and Karp~\cite{HopcroftK73} presented an elegant deterministic algorithm that computes
a maximum
matching in a bipartite graph in time $\fO{|E| \sqrt{|V|}}$, where~$|V|$
is the number of vertices and $|E|$ is the number of edges in the
graph. This algorithm was a milestone, and its running time was hard to beat.
Improvements were obtained for dense graphs~\cite{SankMuchaFOCS04} and for sparse graphs~\cite{MadryFOCS13}, but in the general case, despite countless efforts, no faster algorithm was known for nearly five decades.
In a recent breakthrough~\cite{linearMatching}, a randomized algorithm with running time~$\fO{|E|^{1+o(1)}}$ was presented for the bipartite matching problem.

Since the early work of Hopcroft and Karp, due to its importance, the bipartite matching problem has been studied in a wide variety of dynamic models, where the bipartite graph is updated over time and the goal is to efficiently maintain a good-quality matching at all times~\cite{DBLP:conf/focs/AbboudW14, BernsteinS15, 10.5555/2884435.2884524,Sankowski07,Solomon,GuptaP13,bernstein_et_al:LIPIcs.FSTTCS.2020.11,10.1007/s10878-020-00641-w,10.1145/2897518.2897568,doi:10.1137/1.9781611974782.30,Bernstein2016FasterFD}. 
In this paper, we study the one-sided online model with limited recourse budget introduced by Grove, Kao, Krishnan, and Vitter in 1995~\cite{GroveKKV95} (see~\cref{sec:preliminaries} for a formal definition). 
This model captures a scenario in which clients arrive online and request service from a given set of servers. 
Upon arrival, the client reveals the set of servers that can provide the service desired by the client.
Our objective is to serve as many clients as possible, that is, to maintain a maximum-cardinality matching after each arrival.
Thus, whenever the newly arrived client can be matched in a way that increases the size of the current matching, we assign a server to the client, possibly reallocating previously matched clients.
The goal is to minimize the total reallocation cost over all arrivals, referred to in the literature as the \emph{recourse budget}.

This model of online bipartite matching has a broad range of applications across computer science, including streaming content delivery, web hosting, remote data storage, job scheduling, and hashing (see~\cite{ChaudhuriDKL09} for a detailed discussion).
In these applications, clients may represent, for example, jobs that must be assigned to processing machines, or objects that must be placed in cells of a hash table.
In such settings, dropping clients from service is undesirable; instead, we prefer to reallocate them so as to serve as many clients as possible.
The cost of not serving a client is typically much higher than the cost of reallocation: for instance, copying an object in a hash table is relatively cheap and much more desirable than dropping it from the table entirely.
Nevertheless, it remains important to minimize the total reallocation cost. 
Another motivation for studying this model is its potential to give rise to simple yet efficient offline algorithms, as it was the case in~\cite{6979023,BernsteinHR19}.
This setting contrasts with the classical one-sided online model introduced by Karp, Vazirani, and Vazirani~\cite{KarpVV90}, in which matching decisions are irrevocable.
There, upon the arrival of a client, the algorithm must decide whether to match the client and, if so, to which server, and this decision cannot be changed later.
The objective in that model is to maximize the number of matched clients subject to this irrevocability constraint.
For this model, there are pessimistic bounds on the number of clients that can be served~\cite{KarpVV90}, which however do not apply to our recourse setting.

The first notable result on the one-sided online model with limited recourse was obtained in~\cite{GroveKKV95}, where the authors considered the case in which each client has degree at most two and they proved a tight bound of $\fTheta{n \log n}$ on the recourse budget. Here, $n$ is the total number of inserted clients, which without loss of generality can be assumed to be also the number of servers (this assumption is very common in the related work that we describe below).
Next, in~\cite{ChaudhuriDKL09}, a bound of $\fO{n \log n}$ was established for further restricted scenarios, for instance when the clients arrive in a random order.
The authors studied the most natural greedy algorithm, which extends the matching using a shortest augmenting path upon each arrival, and is referred to as the Shortest Augmenting Path algorithm (\sap).
This led them to conjecture that the \sap algorithm achieves a recourse budget of $\fO{n \log n}$ also for general bipartite graphs under adversarial arrivals.
Nevertheless, at that time it was not even clear whether, in this setting, any recourse budget better than the trivial $\fO{n^2}$ could be achieved, and this was left as the main open question.

The first nontrivial upper bound on the recourse budget, namely $\fO{n \sqrt{n}}$, was proved in~\cite{6979023}, using an algorithm more sophisticated than \sap, but also elegant and simple.
Finally, in the breakthrough paper by Bernstein, Holm, and Rotenberg~\cite{BernsteinHR19}, the \sap algorithm was shown to achieve a total recourse budget of $\fO{n \log^2 n}$, which is within a factor of $\fO{\log n}$ of the known lower bound of~$\Omega(n \log n)$. 
Closing this gap was left as a main open problem in the area.
Shortly after, an upper bound of $\fO{n \log n}$ was established for forests~\cite{BosekSAPtrees22}, and the authors conjectured that forests are the hardest case for \sap. The result of Bernstein, Holm and Rotenberg~\cite{BernsteinHR19} was later generalized to the matroid setting~\cite{DBLP:conf/soda/BuchbinderGHKS24}, still with the same upper bound on the recourse budget.

The \sap algorithm, although it is the most natural strategy for the problem, has its limitations.
In particular, the analyses of \sap in~\cite{BernsteinHR19,BosekSAPtrees18,BosekSAPtrees22} note that shortest augmenting paths can change the matching quite erratically.
Thus, these analyses do not rely on specific properties of the maintained matching, but rather on the structural properties of the portion of the graph that has already been revealed.
As a consequence, a careful inspection of the proofs in~\cite{BernsteinHR19,BosekSAPtrees18,BosekSAPtrees22} shows that all these results extend to a more difficult scenario, in which after the arrival of each client the maintained matching is changed into the worst possible matching for the \sap algorithm.
This setting was introduced in~\cite{BosekSAPtrees18}, where it was called the adversarial dynamic augmenting path setting.
We refer to it as the \emph{malicious matching} setting.

It is easy to see that, in the malicious matching setting, \sap is the best possible strategy: there is nothing to gain from choosing an augmenting path other than a shortest one.
In light of the above, it is natural to ask whether the upper bound of $\fO{n \log^2 n}$ can be improved for general bipartite graphs in the malicious matching setting.
In fact, the authors of \cite{BosekSAPtrees18} conjectured an upper bound of $\bigO(n \log n)$ for this model, in line with the conjectured bound for the non-malicious setting.
In this work, we answer this question in the negative, thereby disproving the conjecture.

\paragraph*{Our results.}
Our main result is the following lower bound for \obm.

\begin{restatable}{theorem}{thmlowerbound}
\label{thm:lowerbound}
For infinitely many $n\in \N$, there exists an instance for \obm on $n$ servers and $n$ clients on which any algorithm incurs total recourse budget at least $\Omega(n \log^2 n)$.
\end{restatable}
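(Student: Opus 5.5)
In the malicious setting, after every arrival the adversary may replace the current matching by any maximum matching of the revealed graph. The cost of the next arrival is then at least the length of the \emph{shortest} augmenting path from the new client with respect to the worst maximum matching. So \sap is optimal and the theorem reduces to a purely graph-theoretic statement. We must construct an arrival order $c_1,\dots,c_n$ of clients over $n$ servers such that the revealed graphs $G_t$ satisfy
\[
\sum_t \max_{M \text{ max.\ matching of } G_{t-1}} \dist_{M}(c_t,\text{free server}) = \Omega(n\log^2 n),
\]
where $\dist_M$ is alternating-path distance and each $G_t$ has a matching saturating all clients. The plan is to build this graph recursively, where each level of recursion costs an extra $\log n$ factor compared to the known $\Omega(n\log n)$ instance of~\cite{GroveKKV95}.

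\textbf{Base gadget (one level).} I would take the standard degree-$2$ lower bound. Clients form paths/trees over servers arranged as in a binary merging process: two components of size $k$, each with exactly one free server, are joined by a client adjacent to both. In the malicious setting the adversary can choose which server in a component is free. So when merging, it places the free server at maximum alternating distance from the new client, and the augmenting path then has length $\Theta(k)$ rather than the $\Theta(\log k)$-type distances one could hope for. This is only the baseline, though. In the degree-$2$ setting even the malicious adversary gets only $\Theta(n\log n)$, since forests are known to give $\bigO(n\log n)$~\cite{BosekSAPtrees22}. So the construction must contain cycles, and the key point is that in a non-tree graph the adversary's freedom to choose the matching lets it relocate "long" alternating structures repeatedly.

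\textbf{Recursive amplification.} The main step is a construction $H_k$ on $\Theta(k)$ servers with the following property. After its clients are inserted (total cost $C(k)$), every server of $H_k$ can be made the unique free server by some maximum matching, and there is a vertex $r$ such that the alternating distance from $r$ to some free server can be forced to be $\Omega(k\log k)$ rather than $\Omega(k)$. I would aim for this via expansion-free, path-like composition: glue $\log k$ copies of smaller gadgets in series so that any augmenting path must traverse each copy's "long" alternating route. Then arrange the next-level merge so that each client arrival forces traversal of a distance that is itself the sum of recursive costs. The target recurrence is $C(n) = 2C(n/2) + \Omega(n\log n)$, which solves to $\Omega(n\log^2 n)$. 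The per-level term $n\log n$ must come from the fact that after merging two halves, an inner phase of $\Theta(n/\log n)$-ish arrivals (or a single doubling sequence) each pays $\Omega(\log n)$ times the naive amount. Equivalently, I would try to schedule arrivals within one merge level so that $\Theta(\log n)$ "rounds" of augmentations each cost $\Omega(n)$ in total.

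\textbf{Verification and main obstacle.} For each arrival I must certify a lower bound on the shortest augmenting path for the chosen malicious matching. I would do this with a cut/layer argument: exhibit a maximum matching $M$ and a BFS layering of the alternating graph from $c_t$ in which all free servers lie at depth $\ge \ell$. It then suffices to check that each layer's neighborhood is matched into the next layer, which is a Hall-type local check on the gadget. The main obstacle is the amplification step. We need cycles that make the graph non-forest, so that the adversary can reposition free vertices far away, yet the graph must remain sparse and "long" so that shortest alternating paths do not shortcut through cycles. This last requirement is why the expander upper bound holds. I would first try a layered grid/ladder-like gadget of width $\Theta(\log n)$, whose alternating diameter is large and whose matchings are flexible enough to place the free vertex at either end. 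I would then check whether its recursive composition yields the $\log n$ gain per level.
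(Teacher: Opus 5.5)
\textbf{There is a genuine gap: the proposal does not contain the construction the theorem needs.} Your reduction is right: in the malicious setting \sap{} is optimal, so it suffices to lower-bound, arrival by arrival, the shortest augmenting path with respect to a worst-case maximum matching. Beyond that, however, the step that carries the whole extra $\log n$ factor is left as ``I would try \dots\ and then check whether''.

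Worse, the property you require of $H_k$ is impossible as stated. An alternating path is a simple path alternating between clients and servers. So in a gadget with $\Theta(k)$ servers it has length $\bigO(k)$, and no arrival can be forced to cost $\Omega(k\log k)$ there; in the whole instance every arrival costs $\bigO(n)$. The extra factor therefore cannot come from single augmenting paths exceeding the size of the structure they live in. It must come from many arrivals at an intermediate scale, each paying a logarithmic factor more than that scale.

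Your recurrence $C(n)=2C(n/2)+\Omega(n\log n)$ is only a target, with no mechanism realizing the merge term. The ladder/grid gadget also points the wrong way. What lets the adversary keep every free server far from the arriving client is that the client's neighbourhood is tree-like (high girth), whereas a ladder has $4$-cycles everywhere.

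\textbf{What the paper does instead.} It uses three ingredients, none of which appears in your plan.

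First, it takes a subcubic Hamiltonian graph $G$ with Hamiltonian cycle $H$ and chord set $D$. It reveals the vertex-clients of the incidence graph (servers are the edges of $G$), then degree-one chord-clients $c_e$, $e\in D$, in a chosen order. Since $c_e$ has the single neighbour $s_e$, both become dead. The process is therefore just deleting chords from $G$ one at a time, and an augmenting path for $c_{d_i}$ yields a path in $G-\{d_1,\dots,d_{i-1}\}$ from $d_i$ to an edge whose server is free.

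Second, if the current graph has girth $g$, the edges within distance about $g/2$ of $d_i$ form a tree. Any forest can be saturated by a maximum matching of the revealed graph: extend it to a spanning tree plus one edge, orient so every vertex has out-degree one, match each vertex-client to its out-edge, then re-insert the earlier chord-clients by augmentation. Hence the adversary forces cost at least $g-5$.

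Third, it builds a graph on $2^L$ vertices whose chords split into layers $F_1,\dots,F_{L-1}$ with $|F_i|=\Omega(2^i)$ and $\girth\bigl(H\cup F_1\cup\dots\cup F_i\bigr)=\Omega(i\,2^{L-i})$. The construction repeatedly subdivides the Hamiltonian cycle and then greedily adds a matching between degree-$2$ vertices at distance at least $\tfrac12\log n-1$. This keeps the girth logarithmic and is possible while more than $\sqrt n$ such vertices remain, because balls of that radius in a subcubic graph are small. Subdividing a graph whose chords form a matching multiplies its girth by roughly half the subdivision factor.

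Deleting $F_{L-1},\dots,F_1$ in this order, each of the $\Omega(2^i)$ chords of layer $i$ costs $\Omega(i\,2^{L-i})$. The factor $i$ is the logarithmic girth of the $2^{i+1}$-vertex skeleton, stretched by the subdivision. Summing gives $\Omega\bigl(2^L\sum_i i\bigr)=\Omega(n\log^2 n)$.
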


Since the upper bound proved in~\cite{BernsteinHR19} also applies to the malicious matching setting, the lower bound of \cref{thm:lowerbound} is tight and thus settles this problem.
Moreover, since an upper bound of~$\bigO(n \log n)$ is known for forests~\cite{BosekSAPtrees22}, this separates forests from the general setting.
It also disproves the conjecture made in~\cite{BosekSAPtrees22} that forests are the hardest case for \sap, and the conjectured recourse bound in \cite{BosekSAPtrees18} for the malicious setting.

We remark that the graphs in our construction are subcubic.
Hence, while the case where clients have degree at most two admits a bound of $\bigO(n \log n)$~\cite{GroveKKV95}, any larger degree bound does not simplify the problem.

Another way of viewing the result of \cref{thm:lowerbound} is as follows.
If one aims to prove the conjectured bound of $\bigO(n \log n)$ for online bipartite matching with recourse in the non-malicious setting, then the techniques currently available in the literature are already pushed to their limits, and substantially new ideas are needed.

We complement our lower bound with the following upper bound for expander graphs, where the precise definition of an expander needed here is deferred to \cref{sec:expander}.

\begin{restatable}{theorem}{expanderupperbound}
    \label{thm:expanders-formal}
    \label{thm:expanders}
    Let $h > 0$ and $d \geq 1$.
    On every instance of \obm{} on $n$ servers and $n$ clients
    containing a~$d$-regular $h$-edge-expanding spanning subgraph,~\sap{} achieves total recourse budget $\bigO_{h, d}(n \log n)$.
\end{restatable}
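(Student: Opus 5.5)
We will follow the framework of Bernstein, Holm and Rotenberg~\cite{BernsteinHR19}, whose analysis is oblivious to the particular maximum matching maintained and hence valid in the malicious setting. The key quantity is the length of the shortest augmenting path at each arrival. Suppose that after $i$ clients have arrived, $k = n-i$ servers remain free. The plan is to show that every shortest augmenting path then has length $\bigO_{h,d}(\log(n/k)+1)$, measured in client reallocations. Summing over $k=1,\dots,n$ gives
\[
\sum_{k=1}^{n}\bigO\bigl(\log(n/k)+1\bigr)=\bigO(\log(n^n/n!)+n)=\bigO(n),
\]
which is even stronger than required. If the sharper bound fails in corner cases, we fall back on the weaker per-step bound $\bigO(\log n)$, which already gives $\bigO_{h,d}(n\log n)$.

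The main step is a BFS-ball expansion argument in the alternating-path structure. Fix the current graph $G$, the malicious maximum matching $M$, and the newly arrived client $c$, and assume an augmenting path exists. Let $S_j$ be the set of servers reachable from $c$ by $M$-alternating paths of length at most $2j+1$. As long as no free server has been reached, every server in $S_j$ is matched. The clients matched to $S_j$ together with $c$ have all their edges in the revealed graph. Here we use the spanning $d$-regular $h$-edge-expander $H$. We intend to interpret $H$ as a graph on the server set (e.g.\ with clients identified via a fixed perfect matching of $H$, or via the bipartite expander definition of \cref{sec:expander}). The edge expansion of $H$ then says that the set of clients $C_j$ matched into $S_j$ has at least $h\cdot\min(|C_j|,\,n-|C_j|)$ edges of $H$ leaving $S_j$. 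Since $H$ is $d$-regular, these edges hit at least $(h/d)\min(|S_j|,n-|S_j|)$ new servers. Consequently $|S_{j+1}|\ge(1+h/d)|S_j|$ while $|S_j|\le n/2$, and the search reaches size $n/2$ within $\bigO_{h,d}(\log n)$ rounds.

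A symmetric argument runs backwards from the free servers. The set of servers that can reach a free server by alternating paths of length at most $2j$ starts with size $k$, and expands by a factor $(1+h/d)$ per round, using expansion of the complement. It therefore exceeds $n/2$ after $\bigO(\log(n/k))$ rounds. When the two balls meet, we obtain an augmenting path of length $\bigO(\log n)$, or $\bigO(\log (n/k))$ after refinement.

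The main obstacle is that $H$ is a subgraph of the final graph, so at the time of the $i$-th arrival only the $H$-edges incident to the $i$ arrived clients are present. Expansion of the forward ball must therefore only use edges of clients already revealed; this is fine because every client in $C_j$ is revealed. The backward ball is the delicate part, because its expansion would naively need edges of unrevealed clients. To handle this, I would replace the backward argument by a counting one. Since $M$ is maximum, the forward ball can never gain a free server unless an augmenting path exists. When it does exist, I would apply expansion to the set of revealed clients outside the ball to argue that the ball cannot stall before it contains all but $\bigO(k)$ servers. This yields $\bigO(\log n)$ per arrival, and hence the $\bigO_{h,d}(n\log n)$ total in \cref{thm:expanders}.
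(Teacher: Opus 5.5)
Your forward alternating search is the right object; it is exactly what the paper analyses. However, the growth step is asserted rather than derived, and the second half of the search is left open.

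Write $C_j$ for the clients reachable from $c$ by alternating paths of length at most $2j$. Then $S_j=N_{B'}(C_j)$, and while $S_j$ contains no free server, $C_{j+1}$ consists of $c$ and the partners of $S_j$, so $|C_{j+1}|=|S_j|+1$. Neither of your suggested interpretations of the expander yields the growth. A fixed perfect matching of $\widehat B$ bears no relation to the malicious $M$. Edge expansion of $W_j=C_j\cup S_j$ by itself only says that many edges leave $W_j$, namely from $S_j$ towards clients outside the ball (possibly unrevealed ones). To turn this into growth you must relate the number of these edges to $|S_j|-|C_j|$, and this is exactly where $d$-regularity on both sides is needed. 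Since every client of the ball is revealed, $N_{\widehat B}(C_j)\subseteq N_{B'}(C_j)=S_j$, so
\[
|\partial_{\widehat B}(W_j)| = d|S_j|-d|C_j| = d\bigl(|C_{j+1}|-|C_j|-1\bigr).
\]
Expansion of $W_j$ then gives $|C_{j+1}|\ge\frac{d+h}{d-h}|C_j|+1$ as long as $|W_j|\le n$.

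For the remaining phase you need neither a backward search from the free servers nor a ``does not stall'' argument. The backward search indeed fails, since the free servers' neighbours may be unrevealed. The non-stalling argument would only give strict growth, and hence paths of length $\bigO(n)$. Instead, when $|W_j|>n$, apply expansion to the complement $\bar W_j=V(\widehat B)\setminus W_j$. It may contain unrevealed clients, but this is harmless: expansion is a property of the full graph $\widehat B$, and $|\partial_{\widehat B}(\bar W_j)|=|\partial_{\widehat B}(W_j)|$ is still given by the identity above, which only needs the clients inside the ball to be revealed. This yields $n-|C_{j+1}|\le\frac{d-h}{d+h}(n-|C_j|)-1$. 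Since $n-|C_j|\ge 1$ as long as no free server is reached, this phase also lasts $\bigO_{h,d}(\log n)$ rounds, giving an augmenting path of length $\bigO_{h,d}(\log n)$ at every arrival. Your inequality with $\min(|S_j|,n-|S_j|)$ is in fact true and would have covered this phase, but only once derived through this count.

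Finally, drop the sharper per-step bound $\bigO(\log(n/k)+1)$: in the malicious setting it is false in general, not merely in corner cases. Take a $d$-regular bipartite expander of logarithmic girth. The adversary can reveal only $\bigO(\sqrt n)$ clients and match every server in the tree-like neighbourhood of the new client to one of its children. Then all servers within alternating distance $\Omega(\log n)$ are matched, while $k\ge n/2$ servers are free.
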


We find this result interesting for two reasons. First, supergraphs of expanders are well-connected and can be dense,
in contrast to other classes for which a budget $\bigO(n \log n)$ has been proven, such as forests and graphs where clients have degree at most two. Our result, together with previous work, shows that the two extreme cases admit the stronger bound, whereas the intermediate case is genuinely harder.
Second, the analysis giving $\bigO(n \log^2 n)$ recourse budget~\cite{BernsteinHR19} crucially relies on a~certain bound on the number of long augmenting paths \cite[Lemma 6]{BernsteinHR19}.
During our work, one of our first steps was a construction of an expander graph that proves this bound to be tight.
Indeed, our final construction is an expander-like graph with some degree-one vertices attached.
Thus, \cref{thm:expanders} can be viewed as a key step towards understanding the structure needed for the proof of \cref{thm:lowerbound}.

\paragraph{Outline and overview of techniques.}
In \cref{sec:preliminaries}, we present preliminaries and give a formal description of the problem studied in this paper.

In \cref{sec:lowerbound}, we turn to the proof of our main result, the lower bound of $\Omega(n \log^2 n)$ for \obm.
The proof consists of two key ingredients.
In \cref{sec:matching_instance}, we first show how to turn a graph~$G$ satisfying certain key properties concerning its girth (i.e., the length of a shortest cycle) into a difficult instance of the problem.
The girth plays an important role for us since we make use of the fact that a graph of girth $g$ looks locally like a tree within any ball of radius less than $(g-1)/2$.
A second key ingredient is that, once a client of degree one is revealed and matched to its neighbor, no augmenting path can later pass through this client or through its neighboring server.
In the previous literature, such vertices were called \emph{dead} and can be thought of as being deleted from the graph.
In \cref{sec:construct_G}, we then carefully construct a graph that initially has girth $\Theta(\log n)$ and has the following property: if we remove edges in a certain order, its girth gradually and sufficiently increases, eventually becoming~$\Theta(n)$.
We note that this construction can be viewed as a purely graph-theoretic result and might be of independent interest.
At the same time, it provides exactly the missing ingredient needed for our lower bound construction, and applying our results from \cref{sec:matching_instance} to this graph yields our main result, \cref{thm:lowerbound}.

Finally, in \cref{sec:expander}, we prove our upper bound for expander graphs, \cref{thm:expanders}.
The key property of expanders that we use is the following: the neighborhood of any not-too-large set is larger than the set itself by a constant factor.
Using this, we are able to prove that the number of servers reachable from a client by an alternating path of length $i$ increases exponentially in $i$.
From this, we conclude that, at every time step, there exists an augmenting path of length at most~$\bigO(\log n)$.

\paragraph{Comment on AI usage.}
During the proof-discovery process of \cref{thm:lowerbound}, we used the large language model GPT-5.5 Pro.
Before using this tool, we had developed a preliminary, weaker version of the lower bound.
The tool was then used in an iterative discussion to refine the lower bound, which helped find the final construction presented in this paper.

\section{Preliminaries}
\label{sec:preliminaries}

First, we collect the graph-theoretic notation used throughout the paper and then formalize the model that we study.

\paragraph{Basic graph notation.}
For a graph $G=(V,E)$, we write $V(G)$ for its vertex set and $E(G)$ for its edge set.
A \emph{matching} in $G$ is a set of pairwise non-incident edges.
For a set of vertices $X \subseteq V(G)$, we denote by $N_G(X)$ the open neighborhood of vertices from $X$, while by $G[X]$ we denote the subgraph of $G$ induced by $X$; that is,
$
G[X]=(X,\{\{u,v\}\in E(G): u,v\in X\}).
$
The girth of $G$, denoted by $\girth(G)$, is the length of a shortest cycle in $G$ (and is $\infty$ if $G$ is acyclic).
A Hamiltonian cycle in $G$ is a simple cycle that contains all vertices of $G$, and we say that $G$ is \emph{Hamiltonian} if it contains such a cycle.
Throughout the paper, $\log$ denotes the logarithm to base~$2$.

\paragraph{The model.}

Next, we formally introduce the setting and notation that we work with. We start by defining the classical \obmr problem, and then proceed to the variant with malicious matchings.

An instance of the \obmr problem is 
a bipartite graph~$B = \tuple{S, C, E_B}$,  where $S$ is the set of \emph{servers}, $C$ is the set of \emph{clients}, and $E_B$ is the edge set.
The set of servers $S$ is known to the algorithm from the beginning, whereas the clients are revealed online.
More precisely, the clients are revealed according to a \emph{presentation order}~$c_1,\ldots,c_n$, where $C=\{c_1,\ldots,c_n\}$.
For $t\in\{1,\ldots,n\}$, let $C_t:=\{c_1,\ldots,c_t\}$ and $B_t:=B[C_t\sqcup S]$.
At time~$t$, the client~$c_t$ and all its incident edges to $S$ are revealed.
The algorithm must then construct a maximum (i.e., maximum-cardinality) matching $M_t$ in $B_t$, knowing only $B_t$ and the previously constructed matchings $M_1,\ldots,M_{t-1}$.
We set $M_0=\emptyset$.
The recourse budget paid at time $t$ is $r_t:=\size{M_t\oplus M_{t-1}}$ (where~$\oplus$ denotes symmetric difference).
The \emph{total recourse budget} of the algorithm is $r:=\sum_{t=1}^n r_t$.
The objective is to minimize $r$.

In \obm, at each time $t$, before the algorithm computes matching~$M_t$, it is additionally presented  an adversarially chosen maximum matching~$M'_{t-1}$ of $B_{t-1}$. 
The algorithm then has to work with this matching $M'_{t-1}$ instead of its maintained matching~$M_{t-1}$, i.e., it has to compute a maximum matching $M_t$ of $B_t$ knowing $M'_{t-1}$, and the recourse budget at time~$t$ is then  $r_t':=\size{M'_{t-1} \oplus M_t}$.
The goal is again to minimize the total recourse budget $r':=\sum_{t=1}^n r_t'$.

\paragraph{Augmenting paths.}
Let $m^*_t$ denote the size of a maximum matching in $B_t$. 
Fix an arbitrary maximum matching $M$ in $B_{t-1}$.
An \emph{alternating path} in $B_t$ with respect to $M$ is a path whose edges alternate between edges outside $M$ and edges inside $M$.
For our purposes, an \emph{augmenting path} in~$B_t$ with respect to~$M$ is an alternating path that starts at the newly revealed client $c_t$ and ends at a server $s\in S$ that is unmatched by $M$.
By Berge's lemma, $m^*_t=m^*_{t-1}+1$ if and only if an augmenting path exists in $B_t$, and since it does not exist in $B_{t-1}$, this augmenting path starts at~$c_t$ as required.

Note that, if $m^*_t=m^*_{t-1}$, then the recourse $r_t$ of any reasonable algorithm is~$0$, as there is no need to change the matching since it is already maximum. 
In particular, we assume that in that case the shortest augmenting path with respect to any maximum matching has length~$0$.
Observe that, in the interesting case 
where $m^*_{t-1}< m^*_t$, i.e., the algorithm is forced to change the matching, the symmetric difference $M'_{t-1} \oplus M_t$ is a collection of alternating paths and cycles, containing an augmenting path in $B_t$ with respect to $M'_{t-1}$. 
Thus, the matching $M_t$ minimizing $\size{M'_{t-1} \oplus M_t}$ is such that $M'_{t-1} \oplus M_t$ is precisely a shortest possible augmenting path. The algorithm which applies the shortest augmenting path after each client's arrival is referred to as the Shortest Augmenting Path algorithm (\sap). 
Based on the above, in the malicious matching setting, the \sap algorithm is optimal with respect to the recourse budget. 

One final remark is that one can assume without loss of generality that $|S|=|C|=n$ and that there is a perfect matching in $B_n$ (i.e., this is the most difficult case). This was observed by many papers~\cite{ChaudhuriDKL09,6979023,BosekSAPtrees18,BosekSAPtrees22} in the non-malicious setting, and the arguments carry over to the malicious setting as well. Nevertheless, since our results do not rely on this observation, we only mention it here for completeness.

\section{A tight lower bound for the malicious setting}
\label{sec:lowerbound}
\newcommand{\hamcycle}{H}

In this section, we prove our main result, \cref{thm:lowerbound}.
We begin by restating it.

\thmlowerbound*

We prove the theorem by separating the argument into two independent ingredients.
We first describe the two ingredients and show that they together imply \cref{thm:lowerbound}.
Their proofs are then given in the following two subsections.

The first ingredient converts a Hamiltonian graph $G$ with chord set $D$ into an instance of \obm. 
The construction is on a bipartite graph derived from $G$, but the resulting process can be interpreted as the following game on $G$: at each step, the adversary selects a chord $e\in D$, forces the algorithm to incur cost at least the girth of the current graph (up to an additive constant), and then deletes $e$.

This interpretation is useful because deleting edges cannot decrease the girth. 
We will apply this process to a graph with initial girth $\Theta(\log n)$ for which the girth increases gradually as chords are deleted, eventually becoming $\Theta(n)$.

In the following, the notation $G - E'$ for a graph $G$ and a subset of its edges $E'$ denotes the graph obtained by deleting all edges in $E'$ from $G$.

\begin{restatable}{proposition}{propmatchinginstance}
\label{prop:matching_instance}
Let $G=(V,E)$ be a graph containing a Hamiltonian cycle $\hamcycle$.
Let $D=E\setminus \hamcycle$ be the set of chords, and let $d_1,\dots, d_k$ be an ordering of $D$.
Then there exists an instance of \obm with $|E|$ clients and $|E|$ servers, on which any algorithm incurs total recourse at least
\begin{equation*}
    \sum_{j=1}^{k} \girth(G - \{d_1,\dots, d_{j-1}\}) -5k.
\end{equation*}
\end{restatable}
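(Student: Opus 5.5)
The plan is to build the instance from the vertex--edge incidence structure of $G$. Take one server $s_v$ for every vertex $v\in V$, and one extra private server $p_d$ for every chord $d\in D$; this gives $|V|+k=|E|$ servers. Take one client $c_e$ for every edge $e\in E$. A client $c_e$ with $e=uv$ is adjacent to $s_u$ and $s_v$, and a chord client $c_d$ is also adjacent to its private server $p_d$, so the bipartite graph is subcubic. The basic dictionary is the usual one. A matching in which every revealed edge-client is matched to an endpoint-server is an \emph{orientation} in which each revealed edge points to one endpoint, and each vertex receives at most one edge. An alternating path from a new client $c_{uv}$ corresponds to a walk in $G$ that starts at $u$ or $v$ and repeatedly follows an edge that points into the current vertex, backwards to that edge's other endpoint. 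It is augmenting exactly when it reaches a vertex-server that receives no edge, or a still-free private server. Its length is about twice the number of edges traversed, so it suffices to force about $\girth/2$ traversed edges per chord.

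The presentation order has an initial phase followed by one round per chord. In the initial phase, the clients of the Hamiltonian cycle $\hamcycle$ arrive first, at constant cost each, say by always augmenting to a free vertex-server adjacent to the arriving edge. At that point the matching orients $\hamcycle$ cyclically, all vertex-servers are matched, and all $p_d$ are free. Round $j$ then handles chord $d_j=uv$; write $G_j:=G-\{d_1,\dots,d_{j-1}\}$. Every chord $d_i$ with $i\ge j$ is meant to be present in the game on $G_j$ (as in the game description), and chords $d_i$ with $i<j$ are \emph{dead}. Deletion is implemented via the dead-vertex remark from the introduction: a client of degree one, once matched to its unique neighbour, can never lie on a later augmenting path, and neither can that neighbour.

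The core claim of round $j$ is that the adversary can choose the current maximum matching so that every augmenting path from the arriving client has length at least $2\girth(G_j)-O(1)$. Three cases have to be excluded. First, the new client's direct neighbours $s_u,s_v$ must be matched. Second, the private servers of live chords must not be reachable by short alternating paths; the adversary ensures this by making each live chord client currently matched to its own $p_d$, so the $p_d$ are not free, or by revealing the chord clients only at their own round. Third, the unique free server must sit only ``across'' a cycle. To arrange this, the adversary picks a shortest cycle $Z$ of $G_j$ through the relevant region and orients the matching so that the only free server is reachable from $u$ and $v$ only by walking around a cycle of $G_j$. The girth bound then follows from a local-tree argument: a walk in $G_j$ of length below $\girth(G_j)$ that starts and ends at prescribed nearby vertices cannot close up, so it cannot reach the free server. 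After the augmentation, a degree-one client attached to $p_{d_j}$ (or to the appropriate endpoint server) kills $d_j$, which passes to $G_{j+1}$. Summing over rounds gives $\sum_j\girth(G_j)-5k$, where the $-5k$ absorbs the constant slack per round: the endpoints of the path, the two extra edges of the kill gadget, and the factor-$2$ bookkeeping between path length and edges.

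The main obstacle is the client and server accounting together with the exact gadget in round $j$. The instance has only $|E|$ clients, so the kill gadget cannot spend fresh clients; it has to reuse the chord's own client, for instance by letting $c_{d_j}$ arrive with neighbours $s_u,s_v,p_{d_j}$ and arranging that $p_{d_j}$ is reachable only through the long cycle, or it has to double-count one of the cycle clients. I would first try the following version. Reveal $c_{d_j}$ adjacent to $s_u$ and $s_v$ only, and place $p_{d_j}$ as the unique free server, attached to a vertex diametrically far along a shortest cycle through $uv$ in $G_j$. Then verify that the maximum matching exists and that maximality is preserved at every step. This last verification is where I expect most of the work, and it is where the constant $5$ will come from.
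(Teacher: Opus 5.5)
\textbf{There is a genuine gap: in your instance the arriving chord-client is always adjacent to a free server, so no adversary can force more than cost $1$ per chord.} Just before $c_{d_j}$ arrives, the revealed clients are the $|V|$ cycle-clients and $c_{d_1},\dots,c_{d_{j-1}}$. The servers adjacent to them are exactly the $|V|$ vertex-servers and $p_{d_1},\dots,p_{d_{j-1}}$. Since these clients have a perfect matching, every maximum matching the adversary can present uses all of these servers. Hence, once $c_{d_j}$ arrives, the only free server adjacent to a revealed client is $p_{d_j}$, whose only neighbour is $c_{d_j}$. So every augmenting path is the single edge $c_{d_j}p_{d_j}$, and \sap pays $1$.

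The same count defeats your fallback. If you drop the edge $c_{d_j}p_{d_j}$, no reachable server is free, and the arrival costs $0$. ``Attaching $p_{d_j}$ to a vertex far away'' is not well-typed, since a server can only neighbour clients, i.e.\ edges of $G$. It would describe a different instance, fixed in advance for all rounds, for which none of your three cases is checked.

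The deletion bookkeeping is also reversed. With edges as clients, revealing $c_{d_1},\dots,c_{d_k}$ \emph{adds} chords. So at round $j$ the chords $d_{j+1},\dots,d_k$, which should be live, are absent, while $d_1,\dots,d_{j-1}$ are present. Your kill gadget needs clients beyond the $|E|$ available. Even then, a degree-one client on $p_{d_j}$ would not delete $d_j$: it forces $c_{d_j}$ onto $s_u$ or $s_v$ and makes $d_j$ permanently traversable.

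\textbf{The missing idea is to swap the roles of vertices and edges.} Take one server $s_e$ per edge $e\in E$ and one vertex-client per vertex, adjacent to the servers of its incident edges. For each chord $d$, add a chord-client adjacent \emph{only} to $s_d$. This gives $|V|+k=|E|$ clients and $|E|$ servers. Reveal all vertex-clients first, so that the whole incidence graph of $G$ is present and $k$ servers are free, at positions the adversary can choose anew in every round. Then reveal $c_{d_1},\dots,c_{d_k}$.

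Each chord-client now does double duty at no extra cost. Its augmenting path must start $c_{d_j},s_{d_j}$. Once it is matched, $s_{d_j}$ is dead, which deletes $d_j$. So at round $j$ all alternating paths live in the incidence graph of $G_j=G-\{d_1,\dots,d_{j-1}\}$.

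With $g=\girth(G_j)$, the edges of $G_j$ within distance $g/2-3$ of $d_j$ form a tree. Extend it to a spanning tree of $G$ plus one extra edge, and orient so that every vertex has out-degree one. This yields a matching of all vertex-clients that covers every server in this ball, and augmenting in the earlier chord-clients keeps all of them covered. Every free server then corresponds to an edge at distance more than $g/2-3$ from $d_j$. So every augmenting path has length more than $2(g/2-3)+1=g-5$.

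Note that this is about $g$, not the $2\girth(G_j)-O(1)$ your core claim asks for. That target also contradicts your own remark that about $\girth/2$ traversed edges suffice. Length about $g$ is exactly what the statement needs.
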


The next proposition provides the graph-theoretic ingredient and constructs the graph on which we will apply \cref{prop:matching_instance}.
In the following, we identify subgraphs (and Hamiltonian cycles) with their set of edges, and vice versa, we identify a set of edges with its edge-induced subgraph.

\begin{restatable}{proposition}{propconstructiong}
\label{prop:construction_G}
There exist constants $\alpha, \beta>0$ such that, for every integer $L\geq 2$, there exists a graph~$G=(V,E)$ on $|V|=2^L$ vertices with the following properties.
$G$ contains a Hamiltonian cycle $\hamcycle$. Its chord set $D = E \setminus \hamcycle$ is a matching and admits a partition $D = F_1 \cup \dots \cup F_{L-1}$ such that, for every $i\in \{1,\dots, L-1\}$,
\begin{enumerate}[label=(\roman*)]
\item $|F_i|\geq \alpha \cdot 2^i$,
\label{cond:sizeF}
\item $\girth \left( \hamcycle \cup \bigcup_{j=1}^i F_j \right) \geq \beta \cdot i \cdot 2^{L-i}.$
\label{cond:girth}
\end{enumerate}
\end{restatable}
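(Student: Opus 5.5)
The plan is to build $G$ greedily, level by level, on top of the cycle $\hamcycle$ on $N=2^L$ vertices, identified with $\mathbb{Z}_N$. The guiding heuristic is the following count. After levels $1,\dots,i$ there are $\Theta(2^i)$ chords. Contracting the degree-2 vertices gives a cubic multigraph on $\Theta(2^i)$ vertices, whose girth is at most about $2i$. Its edges correspond to arcs of length about $2^{L-i}$. So $i\cdot 2^{L-i}$ is the best girth one can hope for, and the aim is to make the contracted graph have girth $\Omega(i)$ at every stage simultaneously, with endpoints well spread out.

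\textbf{Candidate positions.} Level $i$ may use endpoints only from the positions that are odd multiples of $2^{L-i-2}$; this set is nonempty for $i\le L-2$, and level $L-1$ is handled like the small-level cases at the end. There are $2^{i+1}$ such positions, and they are disjoint across levels. Hence $D$ is automatically a matching. Moreover, any two endpoints of chords of levels $\le i$ lie at cycle distance at least $2^{L-i-2}$.

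\textbf{Greedy rule.} Let $r_i=\beta i 2^{L-i}$. I add chords to $F_i$ one at a time, joining two unused candidate positions $u,v$ only if $\dist(u,v)\ge r_i$ in the current graph $\hamcycle\cup F_1\cup\dots\cup F_{i-1}\cup(\text{chords of } F_i \text{ chosen so far})$.

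\textbf{Girth invariant.} Every cycle created by adding $uv$ has length at least $r_i+1$. The cycles that already existed before level $i$ have length at least $\beta(i-1)2^{L-i+1}\ge r_i$ by induction, since $2(i-1)\ge i$ for $i\ge2$. So condition \ref{cond:girth} holds after level $i$.

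\textbf{Size of $F_i$: the main obstacle.} I must show that the greedy rule finds at least $\alpha 2^i$ chords, which reduces to a ball-size bound. Distances only shrink as chords are added, so it suffices to bound balls in any graph with the spacing property. Consider a walk of length $\le r_i$ from $u$. Branching happens only at chord endpoints, which are at distance at least $2^{L-i-2}$ apart. So the walk branches at most $r_i/2^{L-i-2}+1=4\beta i+1$ times. The ball of radius $r_i$ therefore consists of at most $3^{4\beta i+1}$ paths, each of length at most $r_i$. Each such path contains at most $r_i/2^{L-i-2}+1=O(\beta i)$ candidate positions of level $i$. Altogether the ball contains at most $2^{O(\beta i)}\cdot O(\beta i)$ candidate positions. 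Choosing $\beta$ small, this is at most $2^{i-2}$ for all $i\ge i_0$, with $i_0$ an absolute constant.

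\textbf{Counting the greedy steps.} Suppose fewer than $2^{i-2}$ chords have been placed. Then at least $2^{i+1}-2^{i-1}$ positions are unused. Pick any unused $u$: at most $2^{i-2}$ of these positions lie within distance $r_i$ of $u$, so some unused $v$ is far from $u$ and the greedy rule can continue. This gives $|F_i|\ge 2^{i-2}$. The care needed here is to verify the branching/spacing argument, including the second-level effect that newly added level-$i$ chords also create branch points. This is fine because level-$i$ endpoints obey the same spacing $2^{L-i-2}$.

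\textbf{Small levels.} For $i<i_0$ (and for $i=L-1$, which is treated the same way), I would place a constant number of chords of the required level, e.g. roughly antipodal, using the same candidate grid. The girth then stays $\Omega(2^{L-i})$, which exceeds $\beta i 2^{L-i}$ once $\beta$ is small enough. Finally I take $\alpha$ small enough to cover these finitely many levels as well.
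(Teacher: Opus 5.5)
\textbf{Gap: level $L-1$.} With your grid, level $i$ uses the odd multiples of $2^{L-i-2}$. So level $L-2$ already reserves all $2^{L-1}$ odd positions, and level $L-1$ has no candidate positions at all. You then handle $i=L-1$ ``like the small levels'', with a constant number of chords. But condition~(i) demands $|F_{L-1}|\ge \alpha\, 2^{L-1}$, which grows linearly in $N=2^L$. Hence no fixed $\alpha>0$ works for all $L$, and the proposal as written does not prove the statement. Your small-level girth justification also fails there. The spacing bound $\girth=\Omega(2^{L-i})$ beats $\beta i\, 2^{L-i}$ only when $i$ is bounded. At $i=L-1$ it would require $\Omega(1)\ge 2\beta(L-1)$, which is false for large $L$.

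\textbf{Repair.} Shift the index: let level $i\in\{1,\dots,L-1\}$ use the odd multiples of $\sigma_i:=2^{L-i-1}$. This gives $2^i$ candidates per level, namely the odd positions when $i=L-1$. All endpoints of levels $\le i$ are then multiples of $\sigma_i$, and $r_i=2\beta i\sigma_i$. Your greedy rule and ball count then run uniformly for all $i_0\le i\le L-1$ and give $|F_i|\ge 2^{i-3}$.

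Two smaller points in that argument. First, consecutive branch points on a non-backtracking walk can be joined by a chord of length $1$. So a walk of length $r$ branches about $2r/\sigma+O(1)$ times, with $\sigma$ the spacing, rather than $r/\sigma+1$. Since the chords form a matching, no two chord steps are consecutive, so only the constant in $2^{O(\beta i)}$ changes. Second, $i_0$ must be fixed uniformly for all $\beta$ below some absolute constant before you impose $\beta\le 1/(4i_0)$; otherwise the choice is circular.

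\textbf{Comparison with the paper.} After these repairs your route is a valid alternative. The paper uses essentially this shifted dyadic hierarchy, but builds it recursively. It subdivides $H$, then greedily matches the new degree-$2$ vertices at distance at least $\log n/2-1$ in the coarse subcubic graph. This is \cref{lem:add_matching}, which needs only a plain Moore-type ball count. It then transfers the coarse girth $(i+1)/2$ to the final graph via \cref{lem:girth_subdivide}. That avoids your spacing/branching estimate and the tuning of $\beta$ against $i_0$, and yields explicit constants $\alpha=1/4$, $\beta=1/8$. Your version imposes the girth requirement directly at the final scale, at the price of the more delicate ball estimate.
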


Before proving the two propositions, we show that together they imply \cref{thm:lowerbound}.

\begin{proof}[Proof of \cref{thm:lowerbound} using Propositions~\ref{prop:matching_instance}~and~\ref{prop:construction_G}]
Let $L \ge 2$ be arbitrary, and let ${G=(V,E)}$ be the graph given by \cref{prop:construction_G}.
Let $\hamcycle$ be its Hamiltonian cycle, and let ${D = F_1 \cup \dots \cup F_{L-1}}$ be the corresponding partition of its chord set.
Set $N := |V| = 2^L$ and $k := |D|$.

First, note that $k \le N/2$, since $D$ is a matching.
Moreover, condition \ref{cond:sizeF} of \cref{prop:construction_G} implies that
$k \ge |F_{L-1}| \ge \alpha 2^{L-1} = N\alpha/2$.
Thus $|E|=k+N = \Theta(N)$.

We order the chords by first listing all chords in $F_{L-1}$, then all chords in $F_{L-2}$, continuing in decreasing order of the index, and finally all chords in $F_1$.
The order within each set $F_i$ is arbitrary.
Consider a chord $d_j \in F_i$.
Since all chords in $F_{i+1},\dots,F_{L-1}$ appear before $d_j$ in the ordering, we have \begin{equation}
\label{eq:girth_Fj}
\girth (G - \{d_1, \dots, d_{j-1}\}) \geq \girth \Bigl( G - {\textstyle \bigcup_{l=i+1}^{L-1}} F_l \Bigl) = \girth \Bigl(\hamcycle \cup {\textstyle \bigcup_{l=1}^i} F_l \Bigl) 
\overset{\text{Prop.}\ref{prop:construction_G}\ref{cond:girth}}{\geq} \beta \cdot i \cdot 2^{L-i}.
\end{equation}

By \cref{prop:matching_instance}, there exists an instance of \obm of size $|E|=k+N$ on which every algorithm incurs total recourse at least the following amount.
\begin{align*}
r &\geq \sum_{j=1}^k \girth(G - \{d_1,\dots, d_{j-1}\}) -5k
\overset{\eqref{eq:girth_Fj}}{\geq}  
\sum_{i=1}^{L-1} |F_i| \cdot \beta \cdot i \cdot 2^{L-i} -5k\\
\overset{\text{Prop.~}\ref{prop:construction_G}\ref{cond:sizeF}}&{\geq}
\sum_{i=1}^{L-1} \alpha \cdot 2^i \cdot \beta \cdot i \cdot 2^{L-i} -5k
= \alpha \beta \cdot 2^{L} \cdot \sum_{i=1}^{L-1} i -5k
= \alpha \beta \cdot 2^L \cdot \frac{L(L-1)}{2} -5k.
\end{align*}
Since $L=\log N$ and $k=\Theta(N)$, this yields $r = \Omega(2^L L^2) = \Omega(N \log^2 N)$.
Since the number of clients $n$ fulfills $n = \Theta(N)$, this is also $\Omega(n \log^2 n)$. As $L\geq 2$ was arbitrary, this gives a family of instances of unbounded size $n$ with total recourse $\Omega(n \log^2 n)$.
\end{proof}

\subsection{Constructing a lower bound instance from high-girth graphs}
\label{sec:matching_instance}

In this subsection, we prove \cref{prop:matching_instance}.
Throughout the subsection, let $G=(V,E)$ be a graph with Hamiltonian cycle $\hamcycle$ and let $D:=E\setminus \hamcycle$ be its set of chords with the given ordering $d_1,\dots, d_k$.

We first construct a bipartite graph from $G$ and specify the order in which the clients are revealed.
To define an instance of \obm, it then only remains to describe the adversary's matching in each step.

\begin{figure}
\centering
\begin{tikzpicture}[scale=1.2, every node/.style={font=\small, circle, draw, fill, inner sep=0pt, minimum size=6pt}, baseline={(current bounding box.center)}]
\node (a) at (0,0) {}; 
\node (b) at (1,0) {}; 
\node (c) at (1,1) {}; 
\node (d) at (0,1) {};
\node at (-0.3,0) [draw=none, fill=none] {$a$};
\node at (-0.3,1) [draw=none, fill=none] {$d$};
\node at (1.3,0) [draw=none, fill=none] {$b$};
\node at (1.3,1) [draw=none, fill=none] {$c$};
\draw[thick] (a) to (b) to (c) to (d) to (a) to (c);
\end{tikzpicture}
\hspace{3cm}
\begin{tikzpicture}[scale=0.7, every node/.style={font=\small, fill, circle, draw, inner sep=0pt, minimum size=6pt}, baseline={(current bounding box.center)}]
\node (a) at (0,0) {}; 
\node (b) at (0,-1) {}; 
\node (c) at (0,-2) {}; 
\node (d) at (0,-3) {}; 
\node at (-0.5,0) [draw=none, fill=none, rectangle] {$c_a$};
\node at (-0.5,-1) [draw=none, fill=none] {$c_b$};
\node at (-0.5,-2) [draw=none, fill=none] {$c_c$};
\node at (-0.5,-3) [draw=none, fill=none] {$c_d$};
\node (ab) at (2.5,0) {};
\node (bc) at (2.5,-1) {};
\node (cd) at (2.5,-2) {};
\node (ad) at (2.5,-3) {};
\node at (3,0) [draw=none, fill=none, anchor=west, rectangle] {$s_{\{a,b\}}$};
\node at (3,-1) [draw=none, fill=none, anchor=west] {$s_{\{b,c\}}$};
\node at (3,-2) [draw=none, fill=none, anchor=west] {$s_{\{c,d\}}$};
\node at (3,-3) [draw=none, fill=none, anchor=west] {$s_{\{a,d\}}$};
\draw[thick] (a) to (ab) to (b) to (bc) to (c) to (cd) to (d) to (ad) to (a);
\node (ac) at (2.5,-4) {};
\node at (3,-4) [draw=none, fill=none, anchor=west, rectangle] {$s_{\{a,c\}}$};
\draw[thick] (a) to (ac) to (c);
\node (chordclient) at (0,-4) [blue] {};
\node at (-0.8,-4) [draw=none, fill=none, blue, rectangle] {$c_{\{a,c\}}$};
\draw[thick, blue] (chordclient) to (ac);
\end{tikzpicture}
\caption{The figure illustrates a graph $G$ on the left and the resulting constructed instance for \obm on the right.
The client $c_{\{a,c\}}$ (depicted in blue) is in this case the only chord-client.
The order in which the clients are revealed is from top to bottom.
The graph induced by all vertices except $c_{\{a,c\}}$ is the incidence graph of $G$.}
\label{fig:incidence_graph}
\end{figure}

\begin{definition}
\label{def:matching_instance}
The instance resulting from $G$ is defined as follows (cf.~\cref{fig:incidence_graph}).
We define the set of servers to be a copy of $E$, denoted by $S=\{s_e: e\in E\}$.
The set of clients consists of two types.
First, for every vertex $v\in V$, we introduce a \emph{vertex-client} $c_v$, which is adjacent to $s_e$ for every edge $e$ incident to $v$.
The graph defined so far is the \emph{incidence graph} of $G$.
Second, for every chord $e\in D$, we introduce a \emph{chord-client} $c_e$, which is adjacent only to $s_e$.
We denote the constructed bipartite graph by $B$.

The order in which the clients are revealed is defined as follows.
First, the vertex-clients are revealed in an arbitrary order.
Next, the chord-clients are revealed in the order $c_{d_1}, c_{d_2}, \dots, c_{d_k}$.
\end{definition}

Observe that the constructed graph has a perfect matching:
We match every chord-client $c_e$ to~$s_e$.
It then only remains to match each vertex-client to an incident edge of the Hamiltonian cycle~$\hamcycle$, which is obviously possible (by choosing one of the two cyclic orientations of~$\hamcycle$ and matching each vertex to its outgoing edge).
Moreover, the number of servers/clients is~$|E(G)|$ as required for \cref{prop:matching_instance}.
Finally, note that the constructed graph is subcubic if $G$ is.

Before giving the adversary's strategy for constructing the matching, we note two facts that immediately follow from the definition of $B$.

The first observation has already been noted in previous work~\cite{BosekSAPtrees18, BosekSAPtrees22} and follows immediately from the following simple fact.
Let $c$ be a previously revealed client with unique neighbor $s$. In every matching that covers all revealed clients, $c$ is matched to $s$. Consequently, no augmenting path for a client revealed after~$c$ can pass through $s$ or $c$.

\begin{observation}
\label{obs:dead_vertices}
Let $c$ be a client whose unique neighbor is $s$.
Then, after $c$ is revealed, no augmenting path can pass through $s$ or $c$.
\end{observation}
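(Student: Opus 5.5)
We prove \cref{obs:dead_vertices} using the simple fact stated just before it, together with the fact that the matchings in question are maximum matchings of the revealed graph. We fix a time $t$ after $c$ has been revealed, so $c\in C_{t-1}$. We also fix the maximum matching $M$ of $B_{t-1}$ presented by the adversary, with respect to which augmenting paths for the new client $c_t$ are taken. If no augmenting path exists, there is nothing to show. Otherwise $m^*_t=m^*_{t-1}+1$.

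The first step is to argue that $M$ covers all clients of $C_{t-1}$, so that the simple fact applies. In our instance $B$ has a perfect matching. Its restriction to $B_{t-1}$ matches every revealed client, so $m^*_{t-1}=|C_{t-1}|$. Hence every maximum matching of $B_{t-1}$, and in particular $M$, saturates $C_{t-1}$. If one prefers not to rely on the instance, the same conclusion holds in general whenever an augmenting path exists at every step. This is the setting in which the observation is used, and then $m^*_{t-1}=t-1$ by induction.

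Since $c\in C_{t-1}$ is matched in $M$ and its only neighbor is $s$, we get $\{c,s\}\in M$. Now consider an augmenting path $P$ for $c_t$ with respect to $M$, and suppose it contains $s$ or $c$.

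\emph{Case $P$ contains $s$.} The server $s$ is matched, so it is not the endpoint of $P$. It is also not the start $c_t$, which is a client. Hence $s$ is an internal vertex of the alternating path and is incident to the matched edge of $M$ at $s$, namely $\{s,c\}$. So $c\in P$.

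\emph{Case $P$ contains $c$.} The client $c$ is not the start of $P$ because $c\neq c_t$, and it is not the endpoint because that is a server. Hence $c$ is internal and needs two distinct incident edges on $P$. But $c$ has degree one, a contradiction.

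Both cases are therefore impossible. The only point needing care is the saturation claim in the first step, which the perfect-matching property of $B$ settles directly.
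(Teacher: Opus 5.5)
Your proof is correct and follows the same route as the paper. Every matching under consideration saturates the revealed clients, so the degree-one client $c$ is matched to $s$; an alternating path through $s$ would then have to continue into $c$, which cannot be an internal vertex. You additionally spell out two points the paper leaves implicit: the saturation claim, justified via the perfect matching of $B$, and the case analysis on the augmenting path.
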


The next lemma uses the straightforward link between paths in $G$ and paths in its incidence graph.
In the following, for a graph $G'$ and two sets of edges $E_1,E_2\subseteq E(G')$, we define
\[
\dist_{G'}(E_1,E_2):=\min\{\dist_{G'}(u,v): e_1\in E_1,\ e_2\in E_2,\ u\in e_1,\ v\in e_2\}.
\]
We also allow the arguments to be a single edge which we identify with the singleton containing it.

\begin{lemma}
\label{lem:paths_incidence_graph}
Let $i \in \{1,\dots, k\}$ and consider the time step $t$ of the process in which chord-client $c_{d_i}$ is revealed.
Let $M=M'_{t-1}$ be the matching presented by the adversary upon arrival of~$c_{d_i}$.
Let $D':=\{d_1,\dots, d_{i-1}\}$ denote the chords whose chord-clients have already been revealed, and let~$E'\subseteq E\setminus D'$ denote the set of edges $e'$ whose server $s_{e'}$ is unmatched in $M$.
Then the cost incurred by \sap in this time step is at least
\[
2 \dist_{G - D'}(d_i, E') + 1.
\]
\end{lemma}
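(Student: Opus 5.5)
Throughout, I take the shortest augmenting path $P$ with respect to $M$ that starts at $c_{d_i}$, and I translate it into a walk in $G - D'$ from an endpoint of $d_i$ to an endpoint of some edge of $E'$. The cost of \sap at this step is $|P|$, so it suffices to bound the length of this walk.

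\textbf{Setup.} Since all vertex-clients have already been revealed, the maximum matching has the form $M = M_0 \cup \{c_{d_j}s_{d_j} : j<i\}$. Here $M_0$ matches every vertex-client $c_v$ to some $s_e$ with $v \in e$. The chord-clients $c_{d_j}$ with $j<i$ have degree one and must be matched to $s_{d_j}$. By \cref{obs:dead_vertices}, no augmenting path visits $s_{d_j}$ or $c_{d_j}$ for $j<i$.

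\textbf{Shape of $P$.} The path starts at $c_{d_i}$, whose only neighbor is $s_{d_i}$, so $P$ first goes to $s_{d_i}$. If $s_{d_i}$ is unmatched in $M$, then $d_i \in E'$, the distance is $0$, and $|P| \ge 1$. Otherwise $s_{d_i}$ is matched to a vertex-client. The chord-clients $c_{d_j}$ with $j > i$ are not yet revealed, so this partner is some $c_{v_0}$ with $v_0 \in d_i$.

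From there $P$ alternates as $s_{e_0}=s_{d_i},\, c_{v_0},\, s_{e_1},\, c_{v_1},\, \dots,\, s_{e_m}$ with $s_{e_m}$ unmatched. Every client on $P$ beyond $c_{d_i}$ is a vertex-client, because chord-clients are either unrevealed or dead. Consecutive steps $s_{e_{\ell}} \to c_{v_\ell} \to s_{e_{\ell+1}}$ give $v_\ell \in e_\ell \cap e_{\ell+1}$. The edges $e_1,\dots,e_m$ avoid $D'$, since the servers $s_{d_j}$ with $j<i$ are dead. Hence $v_0,v_1,\dots,v_{m-1}$ is a walk in $G-D'$ using the edges $e_1,\dots,e_{m-1}$, with $v_0 \in d_i$ and $v_{m-1} \in e_m \in E'$. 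Therefore $\dist_{G-D'}(d_i,E') \le m-1$.

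\textbf{Counting.} The path has $|P| = 2m+1$ edges: the edge $c_{d_i}s_{d_i}$, then $m$ pairs consisting of one matched and one unmatched edge. This gives $|P| \ge 2(\dist+1)+1 \ge 2\dist+1$. The $m=0$ case is the one handled above.

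\textbf{Main obstacle.} The step that needs care is justifying that every intermediate client is a vertex-client and that no edge of $D'$ is used. Both follow from \cref{obs:dead_vertices} together with the reveal order; the rest is bookkeeping.
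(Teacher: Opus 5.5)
Your proposal is correct and follows essentially the same route as the paper: you handle the case where $s_{d_i}$ is itself unmatched separately, and otherwise use \cref{obs:dead_vertices} to place the rest of the augmenting path in the incidence graph of $G-D'$. You then read off a path $v_0,\dots,v_{m-1}$ in $G-D'$ from an endpoint of $d_i$ to an endpoint of an edge in $E'$ and count edges, exactly as the paper does. The differences are only cosmetic: you index from $0$, and your count $|P|=2m+1\ge 2\dist_{G-D'}(d_i,E')+3$ for $m\ge 1$ keeps the slack that the paper discards via $\ell\ge\ell-1$.
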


\begin{proof}
Upon the arrival of $c_{d_i}$, the cost of \sap is the length of the augmenting path it chooses.
By definition, this augmenting path starts at $c_{d_i}$.
Since $c_{d_i}$ has the unique neighbor $s_{d_i}$, its next vertex is $s_{d_i}$.
The path ends at an unmatched server $s_{e'}$ for some $e'\in E'$.
Moreover, by \cref{obs:dead_vertices}, it cannot contain any server $s_d$ with $d\in D'$, nor any chord-client other than its initial vertex~$c_{d_i}$.
Consequently, after deleting the initial vertex $c_{d_i}$, we obtain a path $P$ in the incidence graph of~$G - D'$ that starts at $s_{d_i}$ and ends at $s_{e'}$ for some $e'\in E'$.

If $P$ has length zero, then $e'=d_i$, and the claim follows immediately.
Otherwise, write $P$ as
\[
P=(s_{e_1},c_{v_1},s_{e_2},c_{v_2},\dots,s_{e_\ell},c_{v_\ell},s_{e_{\ell+1}}),
\]
where $e_1=d_i$ and $e_{\ell+1}=e'$.
For every $j\in\{1,\dots,\ell-1\}$, the vertices $v_j$ and $v_{j+1}$ are adjacent in~$G - D'$ (via the edge $e_{j+1}\in E(G)\setminus D'$).
Moreover, $v_1$ is incident to $d_i$ and $v_\ell$ is incident to $e'$.
In particular,
\[
\ell\geq \ell-1 \geq \dist_{G - D'}(v_1,v_\ell)
\geq \dist_{G - D'}(d_i,e')
\geq \dist_{G - D'}(d_i,E').
\]
Thus the length of $P$ is at least
$
2\ell\geq 2\dist_{G - D'}(d_i,E').
$
Prefixing $P$ with $c_{d_i}$ yields that the length of the augmenting path is at least
$
2\dist_{G - D'}(d_i,E')+1,
$
as claimed.
\end{proof}

It remains to define suitable matchings for the adversary's strategy.
For this, the following lemma will be useful.

\begin{figure}
\centering
\begin{tikzpicture}[scale=1.3, every node/.style={circle, draw, fill, inner sep=0pt, minimum size=5pt}]
\node (A) at (0,0) {};
\node (B) at (0.9,-0.1) {};
\node (C) at (1.1,1.2) {};
\node (D) at (0.3,0.8) {};
\node (E) at (1.4,0.2) {};
\draw[thick] 
(A) edge[midarrow=red] (B)
(B) edge[midarrow=red] (E)
(E) edge[midarrow=red] (C)
(C) edge[midarrow=red] (D);
\draw[red, thick, midarrow=red] (D) to node [left, draw=none, fill=none, yshift=5pt] {$e^*$} (A);
\node (f) at (-0.6,-0.1) {};
\node (g) at (-0.2,-0.5) {};
\node (h) at (-1.1, 0) {};
\node (i) at (-1, -0.5) {};
\draw[thick] 
(f) edge[midarrow=blue] (A)
(g) edge[midarrow=blue] (A)
(h) edge[midarrow=blue] (f)
(i) edge[midarrow=blue] (f);
\node (j) at (1.9,1) {};
\node (k) at (2.6,1.3) {};
\draw[thick]
(j) edge[midarrow=blue] (C)
(k) edge[midarrow=blue] (j);
\node (l) at (-0.5,1.2) {};
\draw[thick]
(l) edge[midarrow=blue] (D);
\node (m) at (2,0.5) {};
\node (n) at (1.9,-0.2) {};
\draw[thick]
(m) edge[midarrow=blue] (E)
(n) edge[midarrow=blue] (E)
;
\end{tikzpicture}
\caption{The figure illustrates the construction given in \cref{lem:tree_covering}.
The graph is the union of a tree and one additional edge $e^*$. It has a unique cycle on which a cyclic orientation is chosen (indicated by red arrows).
All other edges are oriented towards the cycle (indicated by blue arrows).
In the resulting graph, every vertex has out-degree precisely 1.}
\label{fig:tree_covering}
\end{figure}

\begin{lemma}
\label{lem:tree_covering}
Let $T$ be a subgraph of $G$ that is a forest, and let $D'\subseteq D$.
Then there exists a matching in~$B$ that covers all servers $s_e$ with $e\in E(T)$, all vertex-clients, and all chord-clients $c_e$ with $e\in D'$.
\end{lemma}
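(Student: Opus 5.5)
Since the chord-clients $c_e$ with $e\in D'$ have the unique neighbor $s_e$, we match them first: $c_e \mapsto s_e$ for every $e\in D'$. It then remains to match every vertex-client $c_v$ to a server $s_e$ with $e$ incident to $v$, injectively. These servers must avoid $\{s_e : e\in D'\}$, and together with the chord-clients they must cover all of $\{s_e : e\in E(T)\}$. Matching a vertex-client $c_v$ to $s_e$ amounts to orienting $e$ away from $v$. So we look for a set of edges $F\subseteq E\setminus D'$ with an orientation in which every vertex of $G$ has out-degree exactly one, and with $E(T)\setminus D'\subseteq F$. Servers $s_e$ with $e\in E(T)\cap D'$ are already covered by the chord-clients.

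To build $F$, follow the picture in \cref{fig:tree_covering}. Let $T' := T - D'$; this is still a forest and contains no chords of $D'$. Extend $T'$ to a spanning forest of $\hamcycle \cup T'$ using edges of $\hamcycle$. Because $\hamcycle$ is a Hamiltonian cycle, $\hamcycle\cup T'$ is connected, so this gives a spanning tree $T''$ of $G$ with $E(T')\subseteq E(T'')\subseteq E(T')\cup \hamcycle$. In particular $T''$ avoids $D'$, since $D'\subseteq D$ is disjoint from $\hamcycle$. Next add one edge $e^*\in (\hamcycle\cup T')\setminus E(T'')$. Such an edge exists as long as $T'$ is not all of $\hamcycle$ minus an edge; if it is, take $e^*$ to be the missing $\hamcycle$ edge, or more simply start from $\hamcycle$, which contains a cycle, and add $T'$ edges. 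The result $F := E(T'')\cup\{e^*\}$ is a connected spanning unicyclic graph on $|V|$ vertices with $|V|$ edges.

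Orient the unique cycle of $F$ cyclically, and orient every other edge toward the cycle, that is, from child to parent in the trees hanging off the cycle. Each vertex then has out-degree exactly one: a cycle vertex has its single outgoing cycle edge, and a non-cycle vertex has its edge toward its parent. Now match $c_v$ to $s_e$, where $e$ is the out-edge of $v$. This is injective because each edge has exactly one tail, and it uses only servers of edges in $F\subseteq E\setminus D'$, so it does not conflict with the chord-clients. Every vertex-client is matched, every chord-client $c_e$ with $e\in D'$ is matched, and every $s_e$ with $e\in E(T)$ is covered: either $e\in D'$, or $e\in E(T')\subseteq F$ and $s_e$ is matched to the tail of $e$.

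The main care needed is in guaranteeing that $F$ contains all of $T'$, avoids $D'$, and is spanning unicyclic. The cleanest way is the greedy extension inside $T'\cup\hamcycle$: grow $T'$ greedily with $\hamcycle$ edges to a spanning tree, then add any one remaining edge of $\hamcycle\cup T'$. Such an edge remains because $|\hamcycle\cup T'|\ge|\hamcycle|=|V|>|V|-1$.
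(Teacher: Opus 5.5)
Your proof is correct. It differs from the paper's proof in how the chord-clients are handled.

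The paper extends all of $T$, which may contain chords of $D'$, to a spanning tree $T^*$ of $G$ and adds one extra edge. It matches every vertex-client along the out-degree-one orientation. Only afterwards does it insert the chord-clients $c_e$, $e\in D'$, one at a time via augmenting paths. Their existence comes from the perfect matching of $B$, and augmenting never uncovers a previously covered vertex.

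You instead remove $D'$ from $T$ first and build the spanning unicyclic graph inside $\hamcycle\cup(E(T)\setminus D')$, so it avoids $D'$ entirely. Each $c_e$ then simply takes its unique neighbor $s_e$, and the servers $s_e$ with $e\in E(T)\cap D'$ are covered by those chord-clients.

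Your route gives a fully explicit matching and needs no augmenting-path argument. It also makes immediate that the only matched clients are vertex-clients and chord-clients of $D'$, which is what the adversary needs in \cref{prop:matching_instance}. The paper's route is shorter, because it need not keep the tree away from $D'$ and reuses the standard fact that augmentation preserves coverage.

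A minor point: your case distinction for the existence of $e^*$ is unnecessary. The count $|\hamcycle\cup T'|\ge|V|>|V|-1=|E(T'')|$ already covers the case where $T'$ is $\hamcycle$ minus an edge.
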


\begin{proof}
First, extend $T$ to a spanning tree $T^*$ of $G$ and pick an edge $e^*\in E(G)\setminus T^*$, which exists since~$G$ is Hamiltonian.
Then $T^*\cup \{e^*\}$ contains a unique cycle $C^*$.
Choose a cyclic orientation of the edges of $C^*$, and orient all edges in $T^*\setminus C^*$ towards $C^*$ (cf.~\cref{fig:tree_covering}).
That is, if~$e=\{u,v\}\in T^*\setminus C^*$ and $\dist_{T^*}(u,C^*)=\dist_{T^*}(v,C^*)+1$, then we orient $e$ from $u$ to~$v$.
Observe that every vertex has out-degree precisely one in this orientation of $T^*\cup \{e^*\}$.
In the bipartite graph $B$, we now match every vertex-client $c_v$ to the server $s_e$, where $e$ is the unique edge outgoing from $v$ in the constructed orientation.
So far, the matching covers all vertex-clients and all servers~$s_e$ with $e\in T^*$, and hence in particular all servers~$s_e$ with $e\in T$.

It remains to cover the chord-clients $c_e$ with $e\in D'$.
We do this one by one.
Suppose that~$c_e$ is the next such chord-client and let $M$ be the matching constructed so far.
Since the final graph~$B$ containing all vertex- and chord-clients has a perfect matching, there exists an augmenting path from $c_e$ with respect to $M$.
After augmenting along this path, all clients and servers that were covered before remain covered, and~$c_e$ is covered as well.
Applying this argument to every $c_e$ with~$e\in D'$ yields a matching with the desired properties.
\end{proof}

Now, we have all the prerequisites at hand to prove \cref{prop:matching_instance}. We begin by restating it.

\propmatchinginstance*

\begin{proof}
First, recall that to prove the claimed lower bound for any algorithm, it suffices to argue for \sap.
The instance we construct is defined on the graph $B$, and the clients are revealed as described in \cref{def:matching_instance}.
During the arrival of the vertex-clients, the adversary may choose any maximum matching and we do not make any claims about the algorithm's cost in these steps.

For $i\in \{1,\dots,k\}$, it remains to construct the matching that the adversary presents upon revealing $c_{d_i}$.
Let $D':=\{d_1,\dots,d_{i-1}\}$ denote the chords whose chord-clients have already been revealed, let $G':=G-D'$, and let $g:=\girth(G')$.
Moreover, let
\[
T:=\{e\in E(G'):\dist_{G'}(d_i,e)\leq g/2-3\}
\]
be the set of edges in $G'$ at distance at most $g/2-3$ from $d_i$.

We first observe that $T$ is a tree.
Indeed, let $u$ be an endpoint of $d_i$.
Then every vertex incident to an edge in $T$ has distance at most $(g/2-3)+2=g/2-1$ from~$u$ in $T$.
If $T$ contained a cycle, then, considering a breadth-first search tree of $T$ rooted at~$u$, this cycle would contain an edge not belonging to the tree.
Together with the two tree paths from $u$ to the endpoints of this edge, this gives a cycle in $T\subseteq G'$ of length at most
$
2(g/2-1)+1<g,
$
contradicting the girth of~$G'$ being $g$.
By \cref{lem:tree_covering}, applied to $T$ and $D'$, there exists a matching $M$ that covers all servers $s_e$ with $e\in T$, all vertex-clients, and all chord-clients $c_e$ with $e\in D'$.
We let the adversary present this matching upon revealing the client $c_{d_i}$.

Let $E'\subseteq E(G')$ be the set of edges $e$ whose server $s_e$ is not covered by $M$.
Since $M$ covers all servers corresponding to edges in $T$, every edge in $E'$ has distance more than $g/2-3$ from~$d_i$ in~$G'$.
By \cref{lem:paths_incidence_graph}, the cost incurred by \sap in this step is therefore at least
\[
2\dist_{G'}(d_i,E')+1
>
2(g/2-3)+1
=
g-5
=\girth(G')-5.
\]
Summing over all $i\in\{1,\dots,k\}$ yields the statement of the proposition.
\end{proof}

\subsection{Constructing layered high-girth chord graphs}
\label{sec:construct_G}

In this subsection, we construct the family of graphs needed for our lower bound, i.e., we prove \cref{prop:construction_G}.
We begin by proving two lemmas that help us in constructing graphs of high girth.

The next lemma captures a simple technique to increase the girth of a graph. Here, subdividing an edge $s-1$ times means replacing it by a path of length $s$.

\begin{lemma}
\label{lem:girth_subdivide}
Let $G$ be a graph with Hamiltonian cycle $\hamcycle$ such that its chords $D=E\setminus \hamcycle$ are a matching.
Let $s\geq 1$ be an integer, and let $G'$ be obtained from $G$ by subdividing each edge of $\hamcycle$ exactly $s-1$ times.
Then 
\begin{equation*}
    \girth(G')\geq \frac{s+1}{2} \girth(G).
\end{equation*}
\end{lemma}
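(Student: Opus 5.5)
Take a shortest cycle $C'$ of $G'$ and contract the subdivisions to obtain a closed walk in $G$. First, $C'$ cannot be the subdivided Hamiltonian cycle itself. That cycle has length $s|V|$, and $\girth(G)\le |V|$, so its length already exceeds $\frac{s+1}{2}\girth(G)$; this case is harmless anyway. Otherwise $C'$ uses at least one chord, and it corresponds to a cycle $C$ in $G$ obtained by undoing the subdivisions. Each subdivided path is internally degree-2, so it is traversed entirely or not at all, and $C$ is a genuine cycle of $G$. If $C$ has $a$ Hamiltonian edges and $b$ chord edges, then $|C'| = sa + b$ and $|C| = a+b \ge \girth(G)$.

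The key structural fact is that $D$ is a matching. Hence no two chords are consecutive on $C$: each vertex carries at most one chord, so a chord must be followed by a Hamiltonian edge. Therefore $b \le a$, and so $a \ge (a+b)/2$.

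Now estimate:
\[
|C'| = sa + b = (s-1)a + (a+b) \ge (s-1)\frac{a+b}{2} + (a+b) = \frac{s+1}{2}(a+b) \ge \frac{s+1}{2}\girth(G).
\]
The case where $C$ contains no chord was handled above. Even there the same inequality holds directly, since $a=|V|\ge \girth(G)$ and $b=0$.

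The main point is the observation $b\le a$ from the matching condition; everything else is bookkeeping. The only care needed is in justifying that cycles of $G'$ correspond bijectively to cycles of $G$ under the subdivision, which holds because subdivision vertices have degree $2$.
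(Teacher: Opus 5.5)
Your proof is correct and is essentially the paper's argument: contract a shortest cycle of $G'$ to a cycle of $G$, use that $D$ is a matching to get $b\le a$, and conclude $sa+b=(a+b)+(s-1)a\ge\frac{s+1}{2}(a+b)\ge\frac{s+1}{2}\girth(G)$. The separate chordless case is unnecessary, since the general computation already covers $b=0$. Your opening claim that $C'$ cannot be the subdivided $H$ is false when $D=\emptyset$, and it only matches the bound with equality when $s=1$, but you handle that case correctly anyway.
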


\begin{proof}
Let $Q$ be a cycle in $G'$ of length $\girth(G')$.
By contracting the subdivided edges of $Q$, we obtain a cycle $Q^*$ in $G$, and let $g$ denote its length.
Since $D$ is a matching, the cycle $Q^*$ contains at most~$g/2$ edges from $D$.
Hence $Q^*$ contains at least $g/2$ edges from $\hamcycle$.
Each edge of $\hamcycle$ has been replaced in $G'$ by a path of length $s$, and the edges in $D$ remain unchanged.
Therefore,
\begin{align*}
    \girth(G')&= |Q| 
    = s \cdot |Q^* \cap H| + |Q^* \cap D|
    =|Q^*|+ (s-1)\cdot |Q^* \cap H|\\
    &\geq g+ (s-1) \cdot \frac{g}{2} = \frac{s+1}{2} g
    \geq \frac{s+1}{2} \girth(G).\qedhere
\end{align*}
\end{proof}

Next, we prove a property that allows us to add many edges to a graph while maintaining logarithmic girth.

\begin{lemma}
\label{lem:add_matching}
Let $G$ be a graph on $n\geq 16$ vertices such that every vertex has degree 2 or 3, and assume that $\girth(G)\geq  \log n /2$.
Then one can add a matching between the degree-2 vertices of $G$ such that the resulting graph $G'$ still satisfies $\girth(G')\geq \log n /2$ and has at most $\sqrt{n}$ vertices of degree~2.
\end{lemma}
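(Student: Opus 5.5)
We build the matching greedily, adding one edge at a time between degree-2 vertices and never creating a cycle shorter than $g:=\log n/2$. Each added edge raises two degree-2 vertices to degree 3, so the maximum degree stays at most $3$ throughout. Write $G_0:=G$, and let $G_t$ be the current graph after $t$ edges have been added. Adding an edge $\{u,v\}$ to $G_t$ creates a new cycle of length $<g$ if and only if $\dist_{G_t}(u,v)\le g-2$. So it suffices to find, as long as $G_t$ has more than $\sqrt n$ degree-2 vertices, two distinct degree-2 vertices $u,v$ with $\dist_{G_t}(u,v)\ge g-1$. These vertices are non-adjacent and both have degree 2, so adding $uv$ keeps the new edges a matching on the original degree-2 vertices. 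Existing cycles are untouched, and $\girth(G_{t+1})\ge g$.

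The key step is a ball-counting bound. In a graph of maximum degree $3$, the ball of radius $r\ge 1$ around any vertex contains at most
\[
1+3\sum_{j=0}^{r-1}2^j = 3\cdot 2^r-2 < 3\cdot 2^r
\]
vertices. Fix a degree-2 vertex $u$ in $G_t$ and set $r:=\lceil g\rceil-2$, so that any vertex at distance $>r$ from $u$ has distance $\ge g-1$. The ball of radius $r$ around $u$ has fewer than $3\cdot 2^{r}\le 3\cdot 2^{g-1}=\tfrac32\, 2^{\log n/2}=\tfrac32\sqrt n$ vertices. If the number of degree-2 vertices exceeds this bound, some degree-2 vertex $v$ lies outside the ball, and we can add $uv$. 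The process therefore continues until at most $\tfrac32\sqrt n$ degree-2 vertices remain.

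The main obstacle is closing the constant factor between $\tfrac32\sqrt n$ and the required $\sqrt n$. I would first try a sharper ball count that uses that $u$ has degree 2. This gives at most $1+2(2^r-1)<2^{r+1}$ vertices in the ball, so the ball has fewer than $2^{r+1}\le 2^{g}=\sqrt n$ vertices. Since $g=\log n/2$ need not be an integer, this needs the rounding done carefully: the condition is $\dist(u,v)\ge\lceil g\rceil-1$, i.e.\ $v$ lies outside the ball of radius $\lceil g\rceil-2$, whose size is below $2^{\lceil g\rceil-1}\le 2^{g}$. Hence whenever more than $\sqrt n$ degree-2 vertices exist, one of them lies outside $u$'s ball, and the greedy step succeeds. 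The hypothesis $n\ge16$ gives $g\ge2$, so the radius is nonnegative and the counting is meaningful. Since each step removes two degree-2 vertices, the process terminates with at most $\sqrt n$ of them, and $\girth(G')\ge\log n/2$ is preserved at every step.
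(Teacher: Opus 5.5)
Your approach is the same as the paper's. You greedily join two current degree-2 vertices at distance at least $\log n/2-1$. You then show such a partner exists while more than $\sqrt n$ degree-2 vertices remain, by bounding the ball of radius $r$ around a degree-2 vertex $u$ in a subcubic graph by $2^{r+1}-1$.

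There is one boundary slip. With your threshold $\dist(u,v)\ge\lceil g\rceil-1$ (ball radius $r=\lceil g\rceil-2$), the assertion ``these vertices are non-adjacent'' is false when $g=2$, i.e.\ for $n=16$, which the hypothesis allows. In that case $r=0$, so every other degree-2 vertex qualifies. On the $16$-cycle, for example, you may pick a neighbour of $u$. The edge you ``add'' is then already present (or becomes a parallel edge), and the degree and matching bookkeeping breaks down.

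The fix is to use the paper's slightly larger radius $\lfloor g\rfloor-1$, i.e.\ to require $\dist(u,v)>g-1$. Your own degree-2 count bounds this ball by $2^{\lfloor g\rfloor}-1<2^{g}=\sqrt n$, so a partner $v$ still exists. Now $\dist(u,v)\ge\lfloor g\rfloor\ge 2$, which gives non-adjacency as well as $\dist(u,v)\ge g-1$. For $n>16$ your radius either coincides with $\lfloor g\rfloor-1$ or already forces distance at least $2$, so only $n=16$ is affected.

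A minor point: your opening claim that a short cycle arises ``if and only if $\dist_{G_t}(u,v)\le g-2$'' is only correct for integer $g$. You only ever use the valid sufficient condition $\dist\ge g-1$, so this does not propagate.
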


\begin{proof}
First, observe the following simple fact. Given a graph of girth at least $g$ and two vertices~$u$ and~$v$ at distance at least $g-1$, adding the edge $\{u,v\}$ maintains that the girth is at least $g$. Indeed, any cycle not containing the edge $\{u,v\}$ existed before and thus has length at least $g$. Any new cycle consists of the new edge $\{u,v\}$ and a $u$-$v$-path in the previous graph, so its length is at least~$(g-1)+1=g$. 

We make use of this fact as follows.
Initialize $G'$ to be $G$.
At each step, let $X$ denote the set of vertices that have degree $2$ in the current graph $G'$.
If possible, choose two non-adjacent vertices~$u,v\in X$ with $\dist_{G'}(u,v)\geq \log n/2 -1$, and add the edge $\{u,v\}$ to $G'$.
After this step, both~$u$ and $v$ have degree $3$.
By the observation above, the girth remains at least $\log n/2$ throughout the process.
It remains to show that the process can only stop once $|X|\leq \sqrt n$.

For this, observe that the graph $G'$ is subcubic throughout the process.
Therefore, for any vertex $v\in X$ and $i\geq 1$, the number of vertices at distance exactly $i$ from $v$ is at most $2^i$ (the size of the $i$-th layer of a complete binary tree).
Hence, the number of vertices at distance at most~$\log n/2 -1$ from $v$ is at most
\begin{equation*}
1+ \sum\limits_{i=1}^{\lfloor \log n/2 \rfloor -1} 2^i \leq  2^{\lfloor \log n/2 \rfloor} \leq 2^{\log n /2} = \sqrt{n}.
\end{equation*}
Thus, if $|X|>\sqrt n$, then for every $v\in X$ there exists a vertex $u\in X$ with $\dist_{G'}(u,v)>  \log n/2 -1$.
In particular, since $n\geq 16$, we have $\dist_{G'}(u,v)>1$ so $u$ and $v$ are non-adjacent.
In this case the process can continue, i.e., we can add a new edge between $u$ and $v$.
Consequently, the process stops only once $|X|\leq \sqrt n$.
This completes the proof of the lemma.
\end{proof}

Now, we have all the prerequisites at hand to prove \cref{prop:construction_G}.
We begin by restating it.

\propconstructiong*

\begin{figure}
\centering
\begin{tikzpicture}[scale=0.6]
    \def\R{2}    \colorlet{gtwo}{blue!80!black}
    \colorlet{gthree}{orange!95!black}
    \colorlet{gfour}{violet}
        \tikzset{
        gtwovertex/.style={circle, fill=gtwo, inner sep=0pt, minimum size=6pt},
        gthreevertex/.style={circle, fill=gthree, inner sep=0pt, minimum size=6pt},
        gfourvertex/.style={circle, fill=gfour, inner sep=0pt, minimum size=6pt}
    }
    \draw[thick] (0,0) circle (\R);
    \foreach \i in {1,...,4} {
        \node[gtwovertex] (B\i) at ({90*(\i-1)}:\R) {};
    }
    \draw[thick, gtwo] (B1) to (B3);
    \draw[thick, gtwo] (B2) to (B4);
    \node at (-1,2.05) [anchor=south] {$H_2$};
    \node at (0.5,0.5) [gtwo] {$F_1$};
    \node at (0,-3.5) [anchor=south] {$G_2=K_4$};
    \node at (0,3) {};
\end{tikzpicture}
\hspace{4mm}
\begin{tikzpicture}[scale=0.6]
    \def\R{2}    \colorlet{gtwo}{blue!80!black}
    \colorlet{gthree}{orange!95!black}
    \colorlet{gfour}{violet}
        \tikzset{
        gtwovertex/.style={circle, fill=gtwo, inner sep=0pt, minimum size=6pt},
        gthreevertex/.style={circle, fill=gthree, inner sep=0pt, minimum size=6pt},
        gfourvertex/.style={circle, fill=gfour, inner sep=0pt, minimum size=6pt}
    }
    \draw[thick] (0,0) circle (\R);
    \foreach \i in {1,...,4} {
        \node[gtwovertex] (B\i) at ({90*(\i-1)}:\R) {};
    }
    \foreach \i in {1,...,4} {
        \node[gthreevertex] (G\i) at ({45 + 90*(\i-1)}:\R) {};
    }
    \draw[thick, gtwo] (B1) to (B3);
    \draw[thick, gtwo] (B2) to (B4);
    \node at (-1,2.05) [anchor=south] {$\widehat H_3$};
    \node at (0.5,0.5) [gtwo] {$F_1$};
    \node at (0,-3.5) [anchor=south] {$\widehat G_3$};
    \node at (0,3) {};
\end{tikzpicture}
\hspace{4mm}
\begin{tikzpicture}[scale=0.6]
    \def\R{2}    \colorlet{gtwo}{blue!80!black}
    \colorlet{gthree}{orange!95!black}
    \colorlet{gfour}{violet}
        \tikzset{
        gtwovertex/.style={circle, fill=gtwo, inner sep=0pt, minimum size=6pt},
        gthreevertex/.style={circle, fill=gthree, inner sep=0pt, minimum size=6pt},
        gfourvertex/.style={circle, fill=gfour, inner sep=0pt, minimum size=6pt}
    }
    \draw[thick] (0,0) circle (\R);
    \foreach \i in {1,...,4} {
        \node[gtwovertex] (B\i) at ({90*(\i-1)}:\R) {};
    }
    \foreach \i in {1,...,4} {
        \node[gthreevertex] (G\i) at ({45 + 90*(\i-1)}:\R) {};
    }
    \draw[thick, gtwo] (B1) to (B3);
    \draw[thick, gtwo] (B2) to (B4);
    \node at (-1,2.05) [anchor=south] {$H_3$};
    \node at (0.5,0.5) [gtwo] {$F_1$};
    \draw[thick,gthree] (G1) to (G2);
    \draw[thick,gthree] (G3) to (G4);
    \node at (0.9,-1) [gthree] {$F_2$};
    \node at (0,-3.5) [anchor=south] {$G_3$};
    \node at (0,3) {};
\end{tikzpicture}
\hspace{4mm}
\begin{tikzpicture}[scale=0.6]
    \def\R{2}    \colorlet{gtwo}{blue!80!black}
    \colorlet{gthree}{orange!95!black}
    \colorlet{gfour}{violet}
        \tikzset{
        gtwovertex/.style={circle, fill=gtwo, inner sep=0pt, minimum size=6pt},
        gthreevertex/.style={circle, fill=gthree, inner sep=0pt, minimum size=6pt},
        gfourvertex/.style={circle, fill=gfour, inner sep=0pt, minimum size=6pt}
    }
    \draw[thick] (0,0) circle (\R);
    \foreach \i in {1,...,4} {
        \node[gtwovertex] (B\i) at ({90*(\i-1)}:\R) {};
    }
    \foreach \i in {1,...,4} {
        \node[gthreevertex] (G\i) at ({45 + 90*(\i-1)}:\R) {};
    }
    \foreach \i in {1,...,8} {
        \node[gfourvertex] (X\i) at ({22.5 + 45*(\i-1)}:\R) {};
    }
    \draw[thick, blue] (B1) to (B3);
    \draw[thick, blue] (B2) to (B4);
    \node at (-1,2.05) [anchor=south] {$\widehat H_4$};
    \node at (0.5,0.5) [blue] {$F_1$};
    \draw[thick,gthree] (G1) to (G2);
    \draw[thick,gthree] (G3) to (G4);
    \node at (0.9,-1) [gthree] {$F_2$};
    \node at (0,-3.5) [anchor=south] {$\widehat G_4$};
    \node at (0,3) {};
\end{tikzpicture}
\hspace{4mm}
\begin{tikzpicture}[scale=0.6]
    \def\R{2}    \colorlet{gtwo}{blue!80!black}
    \colorlet{gthree}{orange!95!black}
    \colorlet{gfour}{violet}
        \tikzset{
        gtwovertex/.style={circle, fill=gtwo, inner sep=0pt, minimum size=6pt},
        gthreevertex/.style={circle, fill=gthree, inner sep=0pt, minimum size=6pt},
        gfourvertex/.style={circle, fill=gfour, inner sep=0pt, minimum size=6pt}
    }
    \draw[thick] (0,0) circle (\R);
    \foreach \i in {1,...,4} {
        \node[gtwovertex] (B\i) at ({90*(\i-1)}:\R) {};
    }
    \foreach \i in {1,...,4} {
        \node[gthreevertex] (G\i) at ({45 + 90*(\i-1)}:\R) {};
    }
    \foreach \i in {1,...,8} {
        \node[gfourvertex] (X\i) at ({22.5 + 45*(\i-1)}:\R) {};
    }
    \draw[thick, blue] (B1) to (B3);
    \draw[thick, blue] (B2) to (B4);
    \node at (-1,2.05) [anchor=south] {$H_4$};
    \node at (0.5,0.5) [blue] {$F_1$};
    \draw[thick,gthree] (G1) to (G2);
    \draw[thick,gthree] (G3) to (G4);
    \node at (0.9,-1) [gthree] {$F_2$};
    \draw[thick,gfour] (X1) to (X3);
    \draw[thick,gfour] (X5) to (X2);
    \draw[thick,gfour] (X4) to (X7);
    \node at (-0.8,-0.8) [gfour] {$F_3$};
    \node at (0,-3.5) [anchor=south] {$G_4$};
    \node at (0,3) {};
\end{tikzpicture}
\caption{The family of graphs constructed inductively in \cref{prop:construction_G}.
In each subfigure, the black edges belong to the Hamiltonian cycle $H_L$, respectively $\widehat{H_L}$, and the colored edges belong to~$F_1$, respectively $F_2$ or $F_3$. The blue vertices belong to $G_2$, the orange vertices are added during the construction of $G_3$ from $G_2$, and the purple vertices during the construction of $G_4$ from $G_3$.}
\label{fig:graph_construction}
\end{figure}

\begin{proof}
We prove the statement for $\alpha=\nicefrac{1}{4}$ and $\beta=\nicefrac{1}{8}$.

We build the family of graphs inductively as follows.
First, we construct graphs~$G_L$ on~$2^L$ vertices, each with a Hamiltonian cycle $H_L$, such that $E(G_L)\setminus \hamcycle_L$ is a matching and
$\girth(G_L)\geq  L/2$.
The desired partition of the chords will be defined afterwards.

For the base case, let $G_2:=K_4$ (cf.~\cref{fig:graph_construction}, left).
Choose any Hamiltonian cycle $H_2$ in~$K_4$.
Then $|V(G_2)|=2^2$, the chord set $E(G_2)\setminus \hamcycle_2$ is a matching of size 2, and $\girth(G_2)=3$.

Now suppose that $G_L$ with cycle $H_L$ has already been constructed.
We define an auxiliary graph~$\widehat G_{L+1}$ by subdividing each edge of $H_L$ exactly once (cf.~\cref{fig:graph_construction}).
Let $\widehat H_{L+1}$ be the Hamiltonian cycle obtained from~$H_L$ in this way.
Then $\widehat G_{L+1}$ has $2^{L+1}$ vertices and is subcubic.
Also, it is Hamiltonian so every vertex has degree at least 2.
Moreover, all $2^L$ newly introduced vertices have degree $2$.
By induction hypothesis, we have $\girth(G_L)\geq  L/2$ and, by \cref{lem:girth_subdivide} (with $s=2$), the girth of the subdivided graph $\widehat G_{L+1}$ is at least $\nicefrac{3}{2} \cdot L/2 \geq  (L+1)/2$ (where we used~$L\geq 2$).
In case~$L+1\geq 4$, by \cref{lem:add_matching}, we can add a matching between the degree-2 vertices maintaining that the girth is at least $ (L+1)/2$ such that at most $\sqrt{2^{L+1}}$ vertices of degree 2 remain.
In the special case $L+1=3$, we simply add any matching of two edges between the degree-2 vertices (cf.~$G_3$ in \cref{fig:graph_construction}). The resulting graph in this case has no degree-2 vertices and trivially has girth at least $3\geq (L+1)/2$.

In both cases, the total number of vertices that are covered by the new matching is at least
\begin{equation*}
    2^L-\lfloor \sqrt{2^{L+1}} \rfloor
    \geq 2^L-2^{L-1}= \frac{1}{2}2^L.
\end{equation*}
Here, for \(L\ge 3\) we use \(\sqrt{2^{L+1}}=2^{(L+1)/2}\le 2^{L-1}\), and the case \(L=2\) is checked separately (where we have $\lfloor \sqrt{2^3} \rfloor = 2 = 2^1$).
It follows that the number of edges added in this matching is at least half of the newly covered vertices, i.e., $2^L/4$.
Also, since the matching is added only between degree-2 vertices, the set of all chords remains a matching.

So far, we have constructed, for every $L\geq 2$, a graph $G_L$ with Hamiltonian cycle $H_L$, proved that $\girth(G_L)\geq L/2$, and we have noted that the number of edges added during the construction of~$G_{i+1}$ from $G_i$ is at least $2^i/4$.
For $i\geq 2$, we define the set $F_i$ to be precisely these edges, where the set $F_1$ is simply defined as the set of the two chords of $G_2$.
In particular, we have $|F_i|\geq 2^i/4$ as required.

It only remains to verify the girth condition \ref{cond:girth}.
For this, observe that, for every $L\geq i+1$, the cycle $H_L$ is obtained from $H_{i+1}$ by subdividing each edge precisely $2^{L-(i+1)}$-1 times (cf.~\cref{fig:graph_construction}).
With \cref{lem:girth_subdivide}, we obtain
\begin{align*}
\girth \Bigl(H_L \cup {\textstyle \bigcup_{l=1}^i} F_l \Bigl) 
\overset{\text{Lem.}\ref{lem:girth_subdivide}}&{\geq}
\frac{2^{L-(i+1)}+1}{2} \girth \Bigl(H_{i+1} \cup {\textstyle \bigcup_{l=1}^i} F_l \Bigl)
= \frac{2^{L-i-1}+1}{2} \girth(G_{i+1})\\
&\geq \frac{2^{L-i-1}+1}{2} \cdot \frac{i+1}{2}
\geq \frac{2^{L-i-1}}{2} \cdot \frac{i}{2}
=\frac{1}{8} \cdot i \cdot 2^{L-i}.\qedhere
\end{align*}
\end{proof}

As established in the beginning of this section, Propositions~\ref{prop:matching_instance} and \ref{prop:construction_G} together imply \cref{thm:lowerbound}, thus this completes the proof of our main result.

\section{Upper bound for expander graphs}
\label{sec:expander}

In this section, we prove our upper bound for expander graphs in the malicious setting (cf.~\cref{thm:expanders}).

First, we give the definition of expanders that we work with here.
For a~graph $G = (V, E)$ and a~nonempty set $W \subseteq V$ of vertices with $|W| \leq \frac12 |V|$, define
\[
    h_G(W) \coloneqq \frac{|\partial_G(W)|}{|W|}, \qquad\text{where}\ \partial_G(W) \coloneqq \{\{u,v\} \in E \,\mid\, u \in W, v \notin W\}.
\]
In other words, $h_G(W)$ is the ratio of the number of edges crossing the cut $(W, V \setminus W)$ to the size of $W$.
Then the \emph{edge expansion} of a~graph $G$ is
\[
    h(G) \coloneqq \min \{ h_G(W) \,\mid\, W \subseteq V,\, 1 \leq |W| \leq \tfrac{1}{2} |V| \}.
\]
In particular, we say that $G$ is an \emph{$h$-edge expander} for some $h > 0$ if $h(G) \geq h$.

We lift this definition to bipartite graphs as follows. Let $B=\tuple{S,C,E_B}$ be a bipartite graph, whose vertex set $V(B)=S \sqcup C$ is a disjoint union of the set of servers $S$ and the set of clients~$C$. 
For a~real number $h > 0$ and an~integer $d \geq 1$, we say that $B$ \emph{contains a~$d$-regular $h$-edge-expanding spanning subgraph} if it contains as a~subgraph a~$d$-regular bipartite graph $\widehat{B} = \tuple{S,C, \widehat{E}}$ with~$\widehat{E} \subseteq E_B$ that is an~$h$-edge expander.
Note that this definition requires that the vertex set of the expander $\widehat{B}$ is precisely $V(B)$, and that the expander is a~regular graph.
Also, observe that such a graph contains a perfect matching (because every regular bipartite graph does).

We now move on to the proof of the $\bigO(n \log n)$ recourse upper bound in the case where the client-server graph contains an~expander graph, i.e., the proof of \cref{thm:expanders}.
We begin by restating it.

\expanderupperbound*

The proof of \cref{thm:expanders-formal} is akin to the standard argument that constant-degree edge expanders have logarithmic diameter; in fact, we will show that on each iteration, \sap{} locates an~augmenting path of logarithmic length.
However, particular care must be taken to address the fact that on each iteration, \sap{} must locate an~augmenting path from the most recently introduced client to an~unmatched server (and not merely an~arbitrary path between these vertices).
In particular, the $d$-regularity of the spanning subgraph will play a~crucial role in our argument, as shown below.

\begin{proof}[Proof of \cref{thm:expanders-formal}]
    Without loss of generality, we can always replace $h$ with $\min(h, \frac12)$, and thus we can assume that $h \in (0, 1)$. It follows that $h < d$.
    As promised, we will show that on each iteration, \sap{} will locate a~logarithmic-length augmenting path.
    Let $B=\tuple{S,C,E_B}$ be a bipartite graph with $|S|=|C|=n$ that contains a $d$-regular $h$-edge expander $\widehat{B}=\tuple{S,C,\widehat{E}}$ as in the theorem statement.
    Let $C' \subseteq C$ be the set of already introduced clients, $B' = B[C' \sqcup S]$ be the corresponding portion of $B$, $c' \in C'$ be the most recently introduced (and yet unmatched) client, and $M$ be the matching in $B'$ covering $C' \setminus \{c'\}$.
    Throughout the proof, whenever we speak of alternating paths, we mean an~alternating path in $B'$ with respect to the presented matching~$M$.
    Define $C_i\subseteq C'$, $i \geq 0$, as the set of clients reachable from $c'$ via an~alternating path of length at most~$2i$ in $B'$, and~$S_i$ as the set of servers reachable from $c'$ via an~alternating path of length at most $2i + 1$ in $B'$.

\begin{figure}
\centering

\begin{tikzpicture}[
  x=1.35cm,y=1.25cm,
  dot/.style={circle,fill=black,inner sep=0pt,minimum size=5pt},
  cprime/.style={circle,fill=special!70, draw=black,
                 inner sep=0pt,minimum size=6pt},
  edge/.style={unmatched,line width=1pt,line cap=round},
  match/.style={matching,line width=2pt,line cap=round},
  start/.style={edge,dashed,dash pattern=on 4pt off 3pt},
  reg/.style={line width=.9pt,rounded corners=8pt},
  lab/.style={font=\small}
]

\colorlet{colA}{blue!70!black}
\colorlet{colB}{orange!80!black}
\colorlet{colC}{purple!70!black}
\colorlet{colnew}{yellow!70!black}
\colorlet{special}{magenta!55}
\colorlet{matching}{black}
\colorlet{unmatched}{black!70}

\coordinate (c) at (0,0);
\foreach \i [evaluate=\i as \x using \i-1] in {1,...,9}
  \coordinate (u\i) at (\x,2);
\foreach \i in {1,...,8}
  \coordinate (v\i) at (\i,0);

\newcommand{\reg}[4]{  \filldraw[reg,fill=#1!#2,draw=#1]
    ($(#3)+(-.35,-.38)$) rectangle ($(#4)+(.35,.38)$);}

\begin{scope}[on background layer]
  \reg{colC}{15}{c}{v4}
  \reg{colB}{20}{c}{v3}
  \reg{colA}{20}{c}{v2}
  \reg{colnew}{20}{c}{c}
  \reg{colC}{15}{u1}{u7}
  \reg{colB}{20}{u1}{u4}
  \reg{colA}{20}{u1}{u3}
  \reg{colnew}{20}{u1}{u2}
\end{scope}

\node[lab,text=colnew!60!black] at ($(c)+(0,-.65)$) {$C_0$};
\node[lab] at ($(c)!.5!(v2)+(0,-.65)$) {$\subseteq$};
\node[lab,text=colA] at ($(v2)+(0,-.65)$) {$C_1$};
\node[lab] at ($(v2)!.5!(v3)+(0,-.65)$) {$\subseteq$};
\node[lab,text=colB] at ($(v3)+(0,-.65)$) {$C_2$};
\node[lab] at ($(v3)!.5!(v4)+(0,-.65)$) {$\subseteq$};
\node[lab,text=colC] at ($(v4)+(0,-.65)$) {$C_3$};

\node[lab,text=colnew!60!black] at ($(u1)+(0,.65)$) {$S_0$};
\node[lab] at ($(u1)!.5!(u3)+(0,.65)$) {$\subseteq$};
\node[lab,text=colA] at ($(u3)+(0,.65)$) {$S_1$};
\node[lab] at ($(u3)!.5!(u4)+(0,.65)$) {$\subseteq$};
\node[lab,text=colB] at ($(u4)+(0,.65)$) {$S_2$};
\node[lab] at ($(u4)!.5!(u5)+(0,.65)$) {$\subseteq$};
\node[lab,text=colC,anchor=west] at ($(u5)+(-.15,.65)$)
  {$S_3 = N_{B'}(C_3)$};

\draw[start] (c)--(u1) (c)--(u2);

\foreach \i in {1,...,5}
  \draw[match] (u\i)--(v\i);
\draw[match] (u8)--(v6);
\draw[edge] (u9)--(v7);
\draw[match] (u9)--(v8);
\draw[match] (u7)--(v7);

\draw[edge]
  (v1)--(u3)
  (v2)--(u3)
  (v3)--(u4)
  (u4)--(v5)
  (v4)--(u5)
  (v4)--(u6)
  (v4)--(u7)
  (v7)--(u8)
  (v7)--(u7)
  (v7)--(u9)
  (v8)--(u9);

\node[cprime] at (c) {};
\foreach \i in {1,...,9} \node[dot] at (u\i) {};
\foreach \i in {1,...,8} \node[dot] at (v\i) {};

\node[font=\large] at ($(u1)+(-.75,0)$) {$S$};
\node[font=\large] at ($(c)+(-.75,0)$) {$C'$};
\node[lab] at ($(c)+(0,-.20)$) {$c'$};

\end{tikzpicture}

\caption{The figure illustrates a graph $B' = B[C' \sqcup S]$ at the moment of introducing client $c' \in C'$ together with its edges to $S$, just before matching it.
The presented matching $M$ covering $C'\setminus\{c'\}$ is depicted by thick edges. 
Sets $C_i$, $S_i$ depict clients and servers reachable by alternating paths from $c'$ of length at most $2i$ and $2i+1$, respectively.}
\label{fig:layersets}
\end{figure}

    Observe first that $S_i = N_{B'}(C_i)$ (cf.~\cref{fig:layersets}): on the one hand, if $s \in S_i$, then consider the client $c$ that is the penultimate vertex on the alternating path from $c'$ to $s$.
    Then $s \in N_{B'}(c)$ and $c \in C_i$.
    Conversely, let $s \in N_{B'}(C_i)$ and $c \in C_i$ be the neighbor of $s$ in $B'$.
    If $\{c,s\} \notin M$, consider the alternating path from $c'$ to $c$ of length at most $2i$.
    If $s$ already belongs to this path, then trivially $s \in S_i$.
    Otherwise, form an~alternating path from $c'$ to $s$ of length at most $2i+1$ by appending the edge $\{c,s\}$; this also proves that $s \in S_i$.
    If $\{c,s\} \in M$, then the alternating path from $c'$ to $c$ (of length at most $2i$) already has $\{c,s\}$ as the final edge.
    In particular, $i \geq 1$ and $s$ is reachable from $c'$ via an~alternating path of length at most $2i - 1$, implying $s \in S_{i-1} \subseteq S_i$.

    Next, if $S_i$ only contains matched servers, then $|C_{i+1}| = |S_i| + 1$.
    This holds because $C_{i+1}$ contains exactly all clients matched to $S_i$ and the additional unmatched source client $c'$.

    Suppose $S_\ell$ only contains matched servers, and let $k_i = |C_i|$ for $i = 0, \ldots, \ell + 1$.
    Our aim is to show that the values $k_i$ increase exponentially quickly on an~initial prefix, and then -- on the possible remaining suffix -- the values $n - k_i$ decrease exponentially quickly.
    This will in turn imply that $\ell \leq \bigO(\log n)$.
        In particular, the first index \(L\) for which \(S_L\) contains an unmatched server also satisfies \(L = \bigO(\log n)\), and hence there is an augmenting path of length at most \(2L+1 = \bigO(\log n)\), thus completing the proof.

    For $i \leq \ell$, let $W_i = C_i \cup S_i$.
    Then $W_0 \subseteq W_1 \subseteq \ldots \subseteq W_\ell$, and $|W_i| = k_i + k_{i+1} - 1$ for $i \leq \ell$.
    Now:
    \begin{itemize}
        \item If $|W_i| \leq n$, then $|W_i| \leq \frac12 |V(\widehat{B})|$ and by the fact that $\widehat{B}$ is an~$h$-edge expander, $|\partial_{\widehat{B}}(W_i)| \geq h |W_i|$.
            Also, since $S_i = N_{B'}(C_i)$, then also $S_i \supseteq N_{\widehat{B}}(C_i)$ and therefore every edge of $\partial_{\widehat{B}}(W_i)$ has one endpoint in $S_i$ and the other endpoint in $C \setminus C_i$.
            By $d$-regularity of $\widehat{B}$, the total degree in $\widehat{B}$ of all vertices in $C_i$ is $d|C_i| = dk_i$; hence $\widehat{B}[W_i]$ has exactly $dk_i$ edges.
            By the same token, the total degree in $\widehat{B}$ of all vertices in $S_i$ is $d|S_i| = d(k_{i+1} - 1)$.
            Therefore, $|\partial_{\widehat{B}}(W_i)| = d|S_i| - d|C_i| = d(k_{i+1} - k_i - 1)$.
            We conclude that
            \[
                d(k_{i+1} - k_i - 1) \geq h(k_i + k_{i+1} - 1),
            \]
            or equivalently,
            \[
                k_{i+1} \geq k_i \cdot \frac{d + h}{d - h} + 1.
            \]
        
        \item If $|W_i| > n$, then let $\bar{C}_i \coloneqq C \setminus C_i$, $\bar{S}_i \coloneqq S \setminus S_i$, and $\bar{W}_i \coloneqq \bar{C}_i \cup \bar{S}_i$ so that $|\bar{W}_i| = 2n - |W_i| \leq \frac12 |V(\widehat{B})|$.
            Then $|\partial_{\widehat{B}}(\bar{W}_i)| \geq h|\bar{W}_i|$.
            Also, $\partial_{\widehat{B}}(\bar{W}_i) = \partial_{\widehat{B}}(W_i)$, so by the same argument as above, $|\partial_{\widehat{B}}(\bar{W}_i)| = d(k_{i+1} - k_i - 1)$.
            Therefore,
            \[
                d(k_{i+1} - k_i - 1) \geq h(2n - k_i - k_{i+1} + 1),
            \]
            or equivalently,
            \[
                n - k_{i+1} \leq \frac{d - h}{d + h}(n - k_i) - 1.
            \]
    \end{itemize}
    In particular, the integers $k_0, \ldots, k_{\ell + 1}$ are strictly increasing, and obviously in the range $[1, n]$ as~$k_i=|C_i|$.

    Now let $\theta \coloneqq \frac{d + h}{d - h} > 1$.
    Also let $p \in \{-1, 0, \ldots, \ell\}$ be an~integer such that $|W_i| \leq n$ for all $i = 0, \ldots, p$, and $|W_i| > n$ for all $i = p + 1, \ldots, \ell$.
    Then, for $i = 0, \ldots, p$, we have $k_{i+1} > \theta k_i$.
    Therefore, $p \leq \log_\theta n$.
    Similarly, letting $\bar{k}_i = n - k_i$, we have $1 \leq \bar{k}_i \leq n$ for $i = p + 1, \ldots, \ell$, and $\bar{k}_{i+1} < \bar{k}_i / \theta$ for $i = p + 1, \ldots, \ell - 1$.
    It follows that $\ell - (p + 1) \leq \log_\theta n$.
    We conclude that $\ell \leq 1 + 2\log_\theta n$.
    Thus, let $L$ be the first index for which $S_L$ contains an~unmatched server.
    By our discussion above, we have $L \leq 2 + 2\log_\theta n$.
    It follows that there exists an~augmenting path in $B'$ from $c'$ of length at most $2L + 1 \leq \bigO_\theta(\log n) = \bigO_{d, h}(\log n)$.
    Since \sap{} always chooses the shortest possible augmenting path, the recourse of the algorithm is $\bigO_{d, h}(\log n)$ at every step.

    Repeating the argument above for each of $n$ arrivals, we bound the total recourse of \sap{} by~$\bigO_{d, h}(n \log n)$.
\end{proof}

\paragraph{Acknowledgements.} The authors thank Aditi Dudeja, Jacob Holm and Eva Rotenberg for insightful discussions preceding the development of this result.

 \bibliographystyle{plain}
 \bibliography{ArXiv-Version-1/MaliciousOnlineBipartiteMatching-References}

\end{document}